\newcommand{\bmka}{0}
\title{Decomposing Gaussians with Unknown Covariance}
\author[1]{Ameer Dharamshi}
\author[2]{Anna Neufeld}
\author[3]{Lucy L. Gao}
\author[4]{Jacob Bien}
\author[1,5]{Daniela Witten}
\affil[1]{Department of Biostatistics, University of Washington}
\affil[2]{Department of Mathematics and Statistics, Williams College}
\affil[3]{Department of Statistics, University of British Columbia}
\affil[4]{Department of Data Sciences and Operations, University of Southern California}
\affil[5]{Department of Statistics, University of Washington}
\begin{document}

\maketitle

\begin{abstract} 
Common workflows in machine learning and statistics rely on the ability to partition the information in a data set into independent portions. 
Recent work has shown that this may be possible even when conventional sample splitting is not  (e.g., when the number of samples $n=1$, or when observations are not independent and identically distributed).  However, the  approaches that are currently available to decompose multivariate Gaussian data require knowledge of the covariance matrix.  In many important problems (such as in spatial or longitudinal data analysis, and graphical modeling), the covariance matrix may be unknown and even of primary interest. Thus, in this work we develop new approaches to decompose Gaussians with unknown covariance.  First, we present a general algorithm that encompasses all previous decomposition approaches for Gaussian data as special cases, and can further handle the case of an unknown covariance.  It  yields a new and more flexible alternative to sample splitting when $n>1$.  When $n=1$, we prove that it is impossible to partition the information in a multivariate Gaussian into independent portions without knowing the covariance matrix.  Thus, we use the general algorithm to decompose  a single multivariate Gaussian with unknown covariance into \emph{dependent} parts with tractable conditional distributions, and demonstrate their use for inference and validation. The proposed decomposition strategy extends naturally to Gaussian processes. 
In simulation and on electroencephalography data, we apply these decompositions to the tasks of model selection and post-selection inference in settings where alternative strategies are unavailable. 

\textbf{Keywords:} Correlated data; Randomization; Multivariate Gaussian;  Model validation; Sample splitting; Selective inference.
\end{abstract}

\section{Introduction} \label{sec:intro}

Let $X \in \real^{n \times p}$ be a random matrix with rows that are independently distributed as $N_p(\mu, \Sigma)$, where the mean vector $\mu \in \mathbb{R}^p$ and/or the $p \times p$ positive definite matrix $\Sigma$ are unknown. In the important special case of $n = 1$, $X$ is a single multivariate Gaussian random vector.

In  statistical practice, the data analyst often begins by studying $X$, and then uses the insights gained to inform downstream tasks. Examples are given in Applications \ref{ap:fit} and \ref{ap:test}. 

\begin{app}[\if\bmka1{1: }\fi Fit and validate]\label{ap:fit}
    We wish to (i) fit a model to $X$ in order to estimate the unknown parameter(s), and then (ii) validate the model by assessing its out-of-sample fit.
\end{app}

\begin{app}[\if\bmka1{2: }\fi 
Explore and confirm]\label{ap:test}
    We wish to (i) explore $X$ to generate a hypothesis involving the unknown parameter(s), and then  (ii) confirm (or reject) the hypothesis.
\end{app}

In each application, the sequential structure of the analysis is highly problematic: using the same data $X$ in both (i) and (ii) will lead to invalid inference, as pointed out by  \cite{tian2020prediction} and \cite{ oliveira2021unbiased} in the context of Application~\ref{ap:fit}, and  by \cite{fithian2014optimal} in the context of Application~\ref{ap:test}. 
A natural workaround   is to \emph{decompose} $X$ into two (or more) pieces that partition the information it contains about the unknown parameter(s). If the pieces are independent, then we can simply use the first piece for (i), and the second for (ii).

When $n>1$ and the observations are independent and identically distributed, then we can decompose $X$ using sample splitting \citep{cox1975note}:  $n_1<n$ rows form  a ``training" set used to carry out (i) in Applications~\ref{ap:fit} and \ref{ap:test}, and the remaining $n_2 = n-n_1$ observations form a ``test" set to carry out  (ii). 
However, sample splitting is unattractive or inapplicable when $n$ is small. For instance, small $n$ strictly limits the number of folds into which we can split the data: concretely, when $n < 10$, we cannot do 10-fold cross validation. At the extreme, if our data consist of a single observation ($n=1$) --- for instance, a single realization of a graph or a spatial field 
--- then sample splitting is not an option.

Other than sample splitting, what are our options for decomposing one or more Gaussians into training and test sets? Recently, it has been shown that we can  decompose a $N_p(\mu,\Sigma)$ random vector into two or more independent multivariate Gaussian vectors, provided that $\Sigma$ is known (see \citealt{rasines2021splitting, tian2018selective, oliveira2021unbiased, leiner2022data, neufeld2023data}); we will collectively refer to these proposals as ``Gaussian data thinning." However, if $\Sigma$ is unknown, then these proposals do not apply.  This is a severe limitation since there are many important settings in which $\Sigma$ is unknown and may even be the primary object of interest: examples include principal components analysis, time series analysis, spatial statistics, hierarchical models, covariance and precision graph estimation, and matrix-valued data analysis.

Our goal in this paper is to develop a unified framework to non-trivially decompose one or more realizations of a $N_p(\mu, \Sigma)$ into two or more components, when one or both of the parameters are unknown. (By ``non-trivially", we mean that each of the resulting folds depends on the parameter(s) of interest.) Towards this goal, we introduce a very simple  ``general algorithm" that is composed of two steps: first we (optionally) augment the data with ``observations" of  Gaussian noise, and then we left-multiply the augmented data with a particular matrix. It turns out that both sample splitting and Gaussian data thinning (applicable when $\Sigma$ is known) are special cases of this general algorithm. Furthermore, we show that a special case of this general algorithm yields an entirely new result: we can  decompose $n > 1$ independent Gaussians, with \emph{unknown} $\Sigma$, into $K \leq n$ independent pieces that are not simply a rearrangement of the rows of the original Gaussians; i.e. this generalizes sample splitting.

Next, we turn to the case of a single realization of a  $N_p(\mu,\Sigma)$ with unknown covariance.  This setting is commonly encountered in time series analysis, spatial statistics, and when working with matrix-valued data. None of the aforementioned decomposition strategies can be applied when $n=1$. In this setting, we prove an impossibility result: it is not possible to (non-trivially) decompose a single realization of a  $N_p(\mu,\Sigma)$ with unknown covariance into  independent pieces. Instead, we show that the general algorithm can be applied to this single realization  to obtain two or more \emph{dependent} pieces; in fact, this is a  generalization of the ``P2-data fission" proposal of \cite{leiner2022data}. This approach is fundamentally different from independent decomposition strategies: due to dependency, one cannot simply use one fold for fitting and another for validating, or one fold for exploration and another for confirmation. Rather, we must validate or confirm using \emph{conditional distributions} that account for the fact that the act of fitting or exploring inadvertently provides some information about the validation or confirmation fold. We fully address previously unexplored practical issues that arise when using these dependent folds.

Finally, we extend the ``general algorithm" and resulting decomposition strategies  to Gaussian processes. We then apply these strategies to the tasks of model selection and post-selection inference in settings where alternative strategies are unavailable. 

  Figure~\ref{fig:flowchart} displays a flowchart of the strategies for decomposing Gaussian random variables presented in this paper. 
All theoretical results are proven in the Supplement.

\begin{figure}
\centering
\includegraphics[width=1\textwidth]{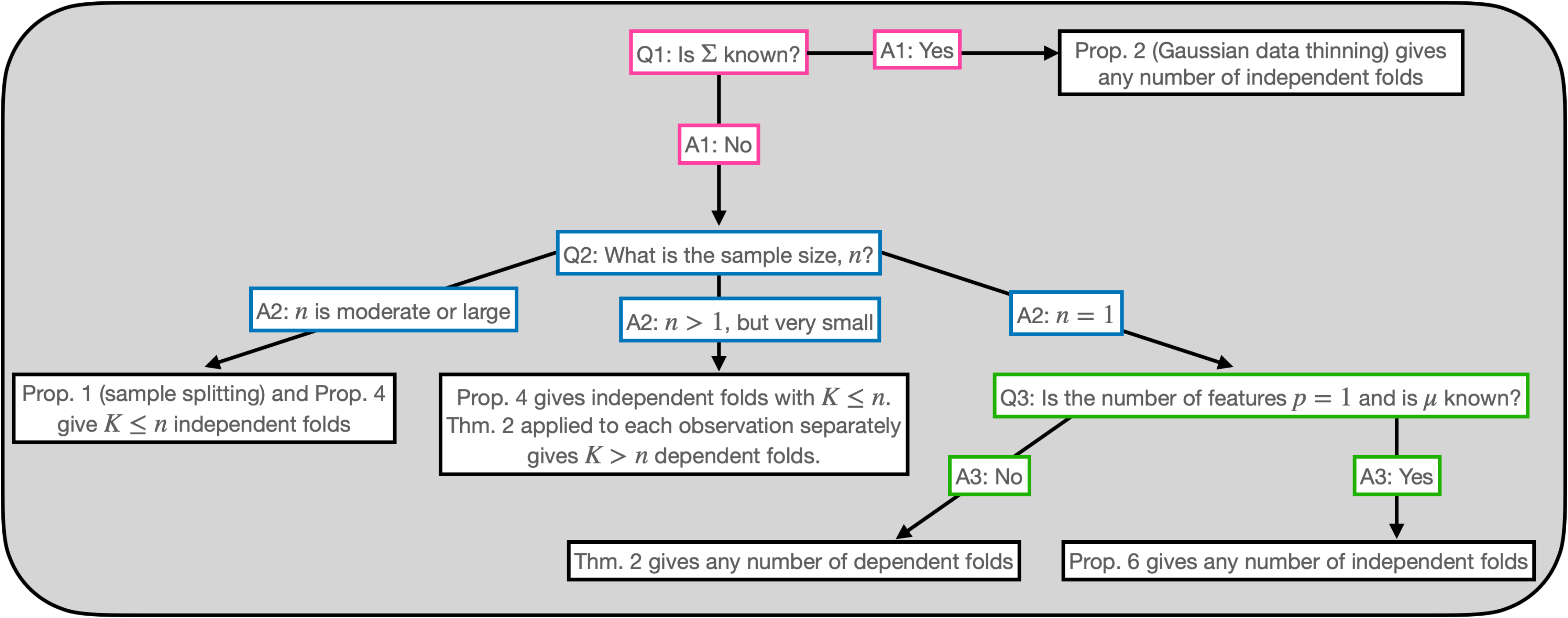}
\caption{Strategies for decomposing finite-dimensional Gaussians in Sections~\ref{sec:revisiting}--\ref{sec:fission}. All strategies arise directly from Algorithm 1 in Section~\ref{sec:premult}. Section~\ref{sec:GP} extends these results to Gaussian processes. 
}
\label{fig:flowchart}
\end{figure}

We close this section with a final remark: in this paper, we do not consider splitting the $p$ \emph{features} of $X$ into a training set with $p_1$ features and a test set with $p_2=p-p_1$ features.  While useful for validating certain interpolative or extrapolative tasks (e.g., leave-future-out cross-validation for forecasting, \citealt{burkner2020approximate}), in general there is no reason to think that a test set of $p_2$ features would provide a meaningful assessment of a model fit to the training set of $p_1$ features, absent strong structural assumptions about the unknown parameter(s). 


\section{A ``general algorithm" for decomposing independent Gaussians} 
\label{sec:premult}

Let $N_{a\times b}(\eta, \Delta, \Gamma)$ denote an $(a\times b)$-dimensional matrix-variate Gaussian with mean $\eta\in\mathbb{R}^{a \times b}$, positive definite row-covariance $\Delta\in\mathbb{R}^{a \times a}$, and positive definite column-covariance $\Gamma\in\mathbb{R}^{b \times b}$. 
 
Suppose that we are given $n$ independent realizations of a $N_p(\mu, \Sigma)$ random variable. For convenience, we will write this as $X \sim N_{n\times p}(1_n \mu^\top ,I_n,\Sigma)$.
This paper centers around a  ``general algorithm" for decomposing $X$, which we present next, and which is visually displayed in Figure~\ref{fig:alg1}.

\begin{algo}[The ``general algorithm" for decomposing $X \sim N_{n\times p}(1_n \mu^\top ,I_n,\Sigma)$ into $K$ parts]
\label{alg:premult} \textcolor{white}{.}\\
\indent \emph{Input:} a nonnegative integer $r\ge \max(K-n,0)$, a $(n+r)\times (n+r)$ orthogonal matrix $Q$, and a $p \times p$ positive definite matrix $\Sigma'$. 

\begin{enumerate}
    \item Generate $r$ independent realizations of a $N_p(0, \Sigma')$ random variable, and augment them with the rows of $X$ to obtain an $(n+r) \times p$ matrix, $X^\aug$. 
    \item Compute $X' = QX^\aug$. 
    \item Deterministically partition the $n+r$ rows of $X'$ into $K\le n+r$ submatrices, $\Xt{1},\ldots,\Xt{K}$, of dimension $n_1\times p,\ldots,n_K\times p$, respectively, where $n_1+\cdots+n_K=n+r$.  
    \end{enumerate}
    
    \indent \emph{Output:} $\Xt{1},\ldots,\Xt{K}$.

\end{algo}

\begin{figure}
\centering
\includegraphics[width=0.85\textwidth]{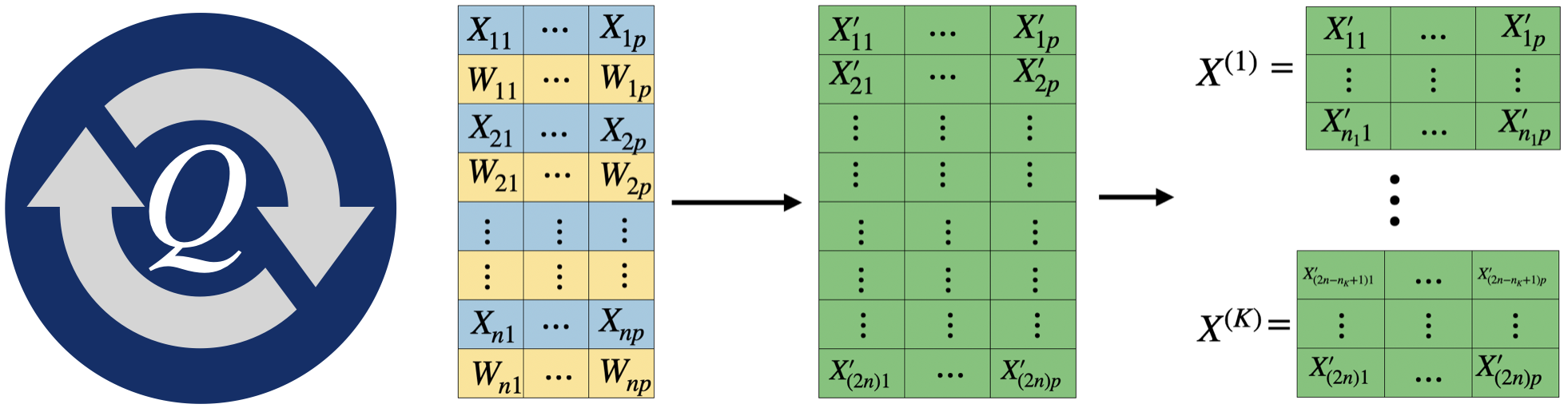}
\caption{A schematic for Algorithm 1 with $r=n$. $X_1,\ldots,X_n$ represent independent $N_p(\mu, \Sigma)$ random vectors, and $W_1,\ldots,W_n$ represent independent $N_p(0, \Sigma')$ noise vectors. $Q$ represents an $(n+r) \times (n+r)$ orthogonal matrix. }
\label{fig:alg1}
\end{figure}

\begin{remark} For convenience, we will typically  choose $r$ to be a multiple of $n$. 
Step 1 of Algorithm \ref{alg:premult} does not specify the order in which the $n$ rows of $X$ should be augmented with the $r$ noise vectors. Without loss of generality, we will construct $X^\aug$ by interspersing the rows of $X$ with the noise vectors as in Figure \ref{fig:alg1}, that is, each row of $X$ is followed by $r/n$ rows of noise. \label{remark:Xaug}
\end{remark}

\begin{remark} In Algorithm \ref{alg:premult}, since $Q$ is orthogonal, we can recover $X$ from $X'$ as follows: premultiply $X'$ by $Q^\top$ to recover $X^\aug$, and then subset the appropriate rows to form $X$. Therefore, Algorithm \ref{alg:premult} preserves all of the Fisher information about $\mu$ and $\Sigma$ contained in $X$. 
\end{remark}

\begin{remark}
\label{rem:stack}
Algorithm \ref{alg:premult} requires independent and identically distributed rows. Suppose that we instead observe $n$ non-independent or non-identically distributed Gaussians. Then, we can stack the $n$ observations into a single vector of length $np$, and apply Algorithm \ref{alg:premult} with $n=1$.
\end{remark}

The above algorithm may involve randomness either through the augmentation in Step 1 (if $r>0$) or through the matrix multiplication in Step 2 (if $Q$ is  random). The amount of information about the unknown parameters allotted to each of the $K$ submatrices is a function of the choice of $r$ and $\Sigma'$ in Step 1 (note that $\Sigma'$ need not equal the true column-covariance $\Sigma$, and indeed will necessarily be unequal if $\Sigma$ is unknown), the choice of $Q$ in Step 2, and $n_1,\dots,n_K$ in Step 3. The specifics of information allocation are discussed in detail in subsequent sections.

Though  simple, this algorithm  will serve as the foundational building block underlying all of the ideas in this paper. In particular, it will  allow us to unify all of the strategies for decomposing independent Gaussians: both existing strategies, and new strategies proposed in this paper. The choices of $Q$, $r$, and $\Sigma'$ will determine the properties of $\Xt{1},\dots,\Xt{K}$, ranging from their interdependence to the amount of Fisher information about the parameters allocated to each fold, and will have implications for their use in Applications \ref{ap:fit} and \ref{ap:test}.


\section{Revisiting recent proposals in light  of Algorithm~\ref{alg:premult}} \label{sec:revisiting}

\subsection{Recovering sample splitting from  Algorithm~\ref{alg:premult}}

Starting simple, we show that sample splitting is immediately a special case of Algorithm \ref{alg:premult}.

\begin{proposition}[Sample splitting] \label{prop:ss} Suppose that $X \sim N_{n \times p}(1_n \mu^\top ,I_n,\Sigma)$ with $n>1$, and consider Algorithm~\ref{alg:premult} with $r=0$, and $Q$  a random matrix drawn uniformly from the set of $n\times n$ permutation matrices.  
Then $\Xt{1},\ldots,\Xt{K}$ form a random partition of the rows of $X$ (uniformly over the set of all partitions of sizes $n_1,\dots,n_K$). This is exactly sample splitting.
\end{proposition}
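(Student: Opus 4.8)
The plan is to verify the claim directly from the definitions, since Proposition~\ref{prop:ss} is essentially an unpacking exercise. First I would run through Algorithm~\ref{alg:premult} with the stated inputs: with $r=0$, Step 1 is vacuous, so $X^\aug = X$; with $Q$ a random $n\times n$ permutation matrix, Step 2 gives $X' = QX$, which is simply $X$ with its rows reordered according to the permutation $\pi$ encoded by $Q$; and Step 3 partitions the $n$ rows of $X'$ deterministically into consecutive blocks of sizes $n_1,\dots,n_K$ (with $n_1+\cdots+n_K = n$, since $r=0$). So $\Xt{k}$ consists of the rows of $X$ indexed by the $k$-th block of $\pi(\{1,\dots,n\})$.

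Next I would argue that this procedure produces a uniformly random partition of the rows of $X$ into blocks of the prescribed sizes. The key observation is that left-multiplication by a permutation matrix permutes rows, so the map from permutations $\pi$ to ordered partitions $(\Xt{1},\dots,\Xt{K})$ is surjective onto the set of all such ordered partitions, and the fibers all have the same cardinality (namely $n_1!\cdots n_K!$, corresponding to the orderings of rows within each block, which do not affect the resulting submatrices as sets of rows — or, if one tracks order within blocks too, each ordered partition arises from exactly one $\pi$). Since $Q$ is drawn uniformly over the $n!$ permutation matrices, the induced distribution on partitions is uniform over the $\binom{n}{n_1,\dots,n_K}$ possibilities. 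Finally I would note that ``assign a uniformly random subset of $n_1$ rows to fold 1, a uniformly random subset of $n_2$ of the remaining rows to fold 2, and so on'' is precisely the definition of sample splitting (equivalently, $K$-fold cross-validation's partition step) in the i.i.d.\ setting of \cite{cox1975note}, which closes the argument.

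There is no real obstacle here — the proposition is a sanity check that Algorithm~\ref{alg:premult} subsumes the classical procedure, and the ``proof'' is a matter of carefully matching up notation. The only point requiring a moment's care is the bookkeeping about whether one regards $\Xt{k}$ as an ordered list of rows or an unordered set, and correspondingly whether the map from permutations to partitions is a bijection or a uniform-fiber surjection; either way the conclusion (uniform distribution over partitions of the stated sizes) is unchanged, so I would state this explicitly and move on.
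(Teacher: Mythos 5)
Your proof is correct and is exactly the argument the paper has in mind: the paper treats Proposition~\ref{prop:ss} as immediate and does not even include a proof in the Supplement, so your careful unpacking (with the fiber-counting of $n_1!\cdots n_K!$ permutations per unordered-within-block partition, giving the uniform distribution over the $\binom{n}{n_1,\dots,n_K}$ partitions) supplies precisely the bookkeeping the authors leave implicit. Nothing further is needed.
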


\begin{remark}
Classical Fisher information results yield that the proportion of rows assigned to $\Xt{k}$ (i.e. $n_k/n$) is the proportion of Fisher information about the parameters allocated to $\Xt{k}$.
\end{remark}

\subsection{Recovering Gaussian data thinning from  Algorithm~\ref{alg:premult}}

When $\Sigma$ is known, many recent papers \citep{rasines2021splitting, tian2018selective, oliveira2021unbiased, leiner2022data, neufeld2023data} have considered an alternative approach to generating independent splits of Gaussian data, which we will refer to here as ``Gaussian data thinning."  This approach is especially attractive in situations when $n=1$ or $n$ is small, so that sample splitting is either unavailable or inflexible. 

We show here that Algorithm \ref{alg:premult} with a suitable choice of $Q$, $r$, and $\Sigma'$ recovers Gaussian data thinning \citep{neufeld2023data}. 

\begin{proposition}[Gaussian $K$-fold data thinning with known $\Sigma$ and $n=1$] \label{prop:thin} Suppose that $X \sim N_p(\mu,\Sigma)$ with $\Sigma$ known, and   consider Algorithm~\ref{alg:premult} with $r=K-1$,   $\Sigma'=\Sigma$, $X^\aug$ constructed according to Remark \ref{remark:Xaug}, and 
\begin{equation}
\label{eq:Qthin}
Q =  \begin{pmatrix} \sqrt{\epsilon_1} & \\
\vdots & \;\;\; U_{(\sqrt{\epsilon_1},\ldots,\sqrt{\epsilon_K} )^\top}^\perp \\
\sqrt{\epsilon_K} & \end{pmatrix},
\end{equation}
where $U_{(\sqrt{\epsilon_1},\ldots,\sqrt{\epsilon_K} )^\top}^\perp$ is a $ K\times (K-1)$ orthogonal matrix that spans  the space orthogonal to $(\sqrt{\epsilon_1},\ldots,\sqrt{\epsilon_K} )^\top$. Here,  $\epsilon_1,\ldots,\epsilon_K$ are positive scalars that sum to $1$.  

This recovers the Gaussian data thinning proposal of \cite{neufeld2023data}, in the sense that $\Xt{1},\ldots,\Xt{K}$ are independent, and marginally $\Xt{k} \sim N_p(\sqrt{\epsilon_k} \mu, \Sigma)$ for $k=1,\ldots, K$. 
\end{proposition}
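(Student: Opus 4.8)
The plan is to compute the distribution of $X' = Q X^\aug$ directly, exploiting the fact that $X^\aug$ is matrix-variate Gaussian and that left-multiplication by a fixed matrix keeps it matrix-variate Gaussian. First I would write down the law of $X^\aug$: by Remark~\ref{remark:Xaug} with $r = K-1$ and $n=1$, $X^\aug$ is a $K \times p$ matrix whose first row is $X \sim N_p(\mu,\Sigma)$ and whose remaining $K-1$ rows are i.i.d.\ $N_p(0,\Sigma')$ with $\Sigma' = \Sigma$. Hence $X^\aug \sim N_{K\times p}(e_1 \mu^\top,\, I_K,\, \Sigma)$, where $e_1 = (1,0,\ldots,0)^\top$. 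The key structural fact is that if $Y \sim N_{a\times b}(M, \Delta, \Gamma)$ then $AY \sim N_{a\times b}(AM, A\Delta A^\top, \Gamma)$; I would either cite this standard matrix-variate identity or, if the paper prefers self-containedness, verify it in one line by noting $\mathrm{vec}(AY)$ is multivariate Gaussian with the appropriate Kronecker-structured covariance $\Gamma \otimes (A\Delta A^\top)$.

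Applying this with $A = Q$ and $\Delta = I_K$ (because $Q$ is orthogonal, $Q I_K Q^\top = I_K$), I get $X' = Q X^\aug \sim N_{K\times p}(Q e_1 \mu^\top,\, I_K,\, \Sigma)$. Now the first column of $Q$ is $(\sqrt{\epsilon_1},\ldots,\sqrt{\epsilon_K})^\top$, so $Q e_1 = (\sqrt{\epsilon_1},\ldots,\sqrt{\epsilon_K})^\top$ and the mean of $X'$ is the rank-one matrix $(\sqrt{\epsilon_1},\ldots,\sqrt{\epsilon_K})^\top \mu^\top$, i.e.\ row $k$ of the mean is $\sqrt{\epsilon_k}\,\mu^\top$. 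The row-covariance being $I_K$ means the $K$ rows of $X'$ are mutually independent; since Step 3 partitions $X'$ into $\Xt{1},\ldots,\Xt{K}$ with $n_k = 1$ each (as $n_1 + \cdots + n_K = n+r = K$), each $\Xt{k}$ is a single row, so $\Xt{k} \sim N_p(\sqrt{\epsilon_k}\,\mu, \Sigma)$ and the $\Xt{k}$ are independent. That matches the claimed conclusion.

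The only remaining point is to confirm the input requirements of Algorithm~\ref{alg:premult} are met: $r = K-1 \ge \max(K-n,0) = K-1$ since $n=1$, $Q$ as displayed in \eqref{eq:Qthin} is genuinely orthogonal (its first column has squared norm $\sum_k \epsilon_k = 1$ and is orthogonal to the columns of $U^\perp_{(\sqrt{\epsilon_1},\ldots,\sqrt{\epsilon_K})^\top}$ by construction, while those columns are orthonormal among themselves), and $\Sigma' = \Sigma$ is positive definite. I do not expect any real obstacle here; the proof is essentially a bookkeeping exercise in matrix-variate Gaussian algebra. The one place to be slightly careful is the indexing induced by Remark~\ref{remark:Xaug}: with $n=1$ the "interspersing" convention just puts the single data row first followed by all $K-1$ noise rows, so $X^\aug$'s mean is $e_1\mu^\top$ rather than some other standard basis vector — worth stating explicitly so the reader can match it against the first column of $Q$ in \eqref{eq:Qthin}. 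Finally, I would add a sentence noting that the displayed $Q$ is exactly (a matrix rendering of) the linear map underlying the thinning construction of \cite{neufeld2023data}, so the distributional conclusion coincides with theirs, not merely in distribution but as the same randomization scheme.
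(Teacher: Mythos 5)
Your proposal is correct and follows essentially the same route as the paper's proof: write $X^\aug \sim N_{K\times p}(e_1\mu^\top, I_K, \Sigma)$, apply the matrix-variate transformation rule with the orthogonal $Q$ to get $X' \sim N_{K\times p}(\epsilon\mu^\top, I_K,\Sigma)$ where $\epsilon = Qe_1$ is the first column of $Q$, and read off independence from the identity row-covariance. The extra checks you include (orthogonality of $Q$, the constraint $r \ge \max(K-n,0)$, the row-ordering convention) are harmless bookkeeping the paper leaves implicit.
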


\begin{remark}
\label{rem:thinfisher}
In Proposition \ref{prop:thin}, the value of $\epsilon_k$ represents the proportion of Fisher information about $\mu$ allocated to the $k$th fold. This is analogous to the role of $n_k$ in sample splitting.
\end{remark}

\begin{remark}
\label{rem:bignthin} 
Proposition \ref{prop:thin} can easily be extended to $X\sim N_{n\times p}(1_n\mu^\top, I_n,\Sigma)$ where $n>1$ and $\Sigma$ is known by applying Algorithm \ref{alg:premult} with $r=n(K-1)$, $\Sigma'=\Sigma$, and $Q=I_n\otimes Q'$, where $Q'$ is the matrix defined in \eqref{eq:Qthin}. This  is equivalent to applying Proposition \ref{prop:thin} to each row of $X$. Note that care is needed to ensure that the rows of $X^\aug$ are aligned with those of $Q$; see Remark \ref{remark:Xaug}.
\end{remark}

We can see that Proposition~\ref{prop:thin} requires knowledge of the column-covariance matrix $\Sigma$. The strategies described in the remainder of this paper do not require knowledge of $\Sigma$.

\subsection{Recovering Gaussian data fission from Algorithm \ref{alg:premult}}

\cite{leiner2022data} consider decompositions of random variables into \emph{dependent} components; they refer to such approaches as ``P2-fission". When $n=1$ and $K=2$, they propose in their supplement a P2-fission decomposition of $X \sim N_{p}( \mu,\Sigma)$ that can be applied when both $\mu$ and $\Sigma$ are unknown. As discussed in \cite{neufeld2024discussiondatafissionsplitting}, this decomposition can be thought of as a ``misspecified" version of Gaussian data thinning, where instead of adding and subtracting a mean-zero Gaussian vector with the same covariance as $X$, we instead add and subtract a mean-zero Gaussian vector with an \emph{arbitrary} covariance matrix, and subsequently  characterise the dependence between the pieces. 

The following proposition expresses \cite{leiner2022data}'s Gaussian P2-fission proposal (up to a rescaling of $1/\sqrt{2}$) as a special case of Algorithm \ref{alg:premult}. The $Q$ in Proposition \ref{prop:fission} is exactly the $Q$ in Proposition \ref{prop:thin} with $K=2$ and $\epsilon_1=\epsilon_2=1/2$. 

\begin{proposition}[Gaussian data fission with $n=1$] \label{prop:fission} Suppose that $X \sim N_p(\mu,\Sigma)$, and consider Algorithm~\ref{alg:premult} with $K=2$, $r=1$,  some positive definite matrix $\Sigma'$, $X^\aug$ constructed according to Remark \ref{remark:Xaug}, and 
$$Q =  \begin{pmatrix} {1}/{\sqrt{2}}\;\; & {1}/{\sqrt{2}} \\ {1}/{\sqrt{2}} \;\;& -{1}/{\sqrt{2}} \end{pmatrix}.$$
Take $n_1=n_2=1$. Then, the joint distribution of $\Xt{1}$ and $\Xt{2}$ is 
$$
\begin{pmatrix} \Xt{1} \\ \Xt{2} \end{pmatrix} \sim N_{2p}\left(\frac{1}{\sqrt{2}}\begin{pmatrix} \mu \\ \mu \end{pmatrix}, \frac{1}{2}\begin{pmatrix} \Sigma + \Sigma' & \Sigma - \Sigma' \\ \Sigma - \Sigma' & \Sigma + \Sigma' \end{pmatrix} \right).
$$
The marginal distribution of $\Xt{1}$ and the conditional distribution of $\Xt{2}|\Xt{1}$ can be recovered from the joint distribution using standard Gaussian manipulations.
\end{proposition}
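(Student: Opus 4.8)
The plan is to make the two rows of $X^\aug$ explicit, note that $(\Xt{1},\Xt{2})$ is a deterministic linear image of a single jointly Gaussian vector, and then read off the mean and covariance by a short block computation.

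First I would unpack Step~1 of Algorithm~\ref{alg:premult} under the convention of Remark~\ref{remark:Xaug}: with $n=1$ and $r=1$, the matrix $X^\aug$ has first row $X\sim N_p(\mu,\Sigma)$ and second row an independent noise vector $W\sim N_p(0,\Sigma')$. Because $\Sigma$ and $\Sigma'$ need not agree, $X^\aug$ is not matrix-variate Gaussian with a Kronecker covariance, so I would instead stack its two rows into the $2p$-vector $\begin{pmatrix}X\\W\end{pmatrix}$, which is jointly Gaussian (since $X$ and $W$ are independent Gaussians) with mean $\begin{pmatrix}\mu\\0\end{pmatrix}$ and block-diagonal covariance $\begin{pmatrix}\Sigma & 0\\0 & \Sigma'\end{pmatrix}$.

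Next, Step~2 gives $X'=QX^\aug$, and with $n_1=n_2=1$ the two outputs are precisely the rows of $X'$, namely $\Xt{1}=\frac{1}{\sqrt{2}}(X+W)$ and $\Xt{2}=\frac{1}{\sqrt{2}}(X-W)$. Stacking these, $\begin{pmatrix}\Xt{1}\\\Xt{2}\end{pmatrix}=(Q\otimes I_p)\begin{pmatrix}X\\W\end{pmatrix}$, so this pair is jointly Gaussian as a linear image of a Gaussian vector, with mean $(Q\otimes I_p)\begin{pmatrix}\mu\\0\end{pmatrix}=\frac{1}{\sqrt{2}}\begin{pmatrix}\mu\\\mu\end{pmatrix}$ and covariance $(Q\otimes I_p)\begin{pmatrix}\Sigma & 0\\0 & \Sigma'\end{pmatrix}(Q^\top\otimes I_p)$. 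The $(i,j)$ block of this product equals $Q_{i1}Q_{j1}\Sigma+Q_{i2}Q_{j2}\Sigma'$, and substituting $Q_{11}=Q_{12}=Q_{21}=-Q_{22}=1/\sqrt{2}$ makes the diagonal blocks $\frac{1}{2}(\Sigma+\Sigma')$ and the off-diagonal blocks $\frac{1}{2}(\Sigma-\Sigma')$, which is exactly the claimed covariance. The final sentence of the proposition then follows immediately from the standard marginalization and conditioning formulas for a partitioned multivariate Gaussian.

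I do not anticipate a genuine obstacle here: the core is a one-line linear-algebra identity surrounded by bookkeeping. The two things to watch are (i) keeping the row-stacking/Kronecker convention consistent, so that left-multiplying the matrix $X^\aug$ by $Q$ corresponds to multiplying the stacked $2p$-vector by $Q\otimes I_p$; and (ii) not treating $X^\aug$ as matrix-variate Gaussian, since in general $\Sigma'\neq\Sigma$ and then its covariance does not factor as a Kronecker product.
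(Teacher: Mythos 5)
Your proposal is correct and follows essentially the same route as the paper's proof: identify $\Xt{1}=\tfrac{1}{\sqrt{2}}(X+W)$ and $\Xt{2}=\tfrac{1}{\sqrt{2}}(X-W)$ and compute the joint Gaussian mean and block covariance directly, with your Kronecker-product bookkeeping being only a notational variant of the paper's entrywise expectation/variance/covariance computation. No gaps.
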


We extend this idea to obtain $K>2$ folds, and address practical issues that arise in its application, in Section \ref{sec:fission}. 

\section{Decomposing Gaussians into independent Gaussians when $\Sigma$ is unknown}

We now show that when $n>1$, Steps 1-2 of Algorithm~\ref{alg:premult} can be applied to obtain a new decomposition of the matrix $X$ into $n$ independent $N_p(\mu, \Sigma)$ Gaussian random vectors, without knowledge of $\mu$ or $\Sigma$; these vectors can then be reconfigured in Step 3 to produce matrices $\Xt{1},\ldots,\Xt{K}$. While sample splitting can also accomplish this task, the rows of the matrices $\Xt{1},\ldots,\Xt{K}$ arising from Proposition~\ref{prop:premult} will, in general, \emph{not} be copies of the rows of $X$. Proposition~\ref{prop:premult} is illustrated in Figure~\ref{fig:prop-premult}.  

\begin{figure}
\centering
\includegraphics[width=0.85\textwidth]{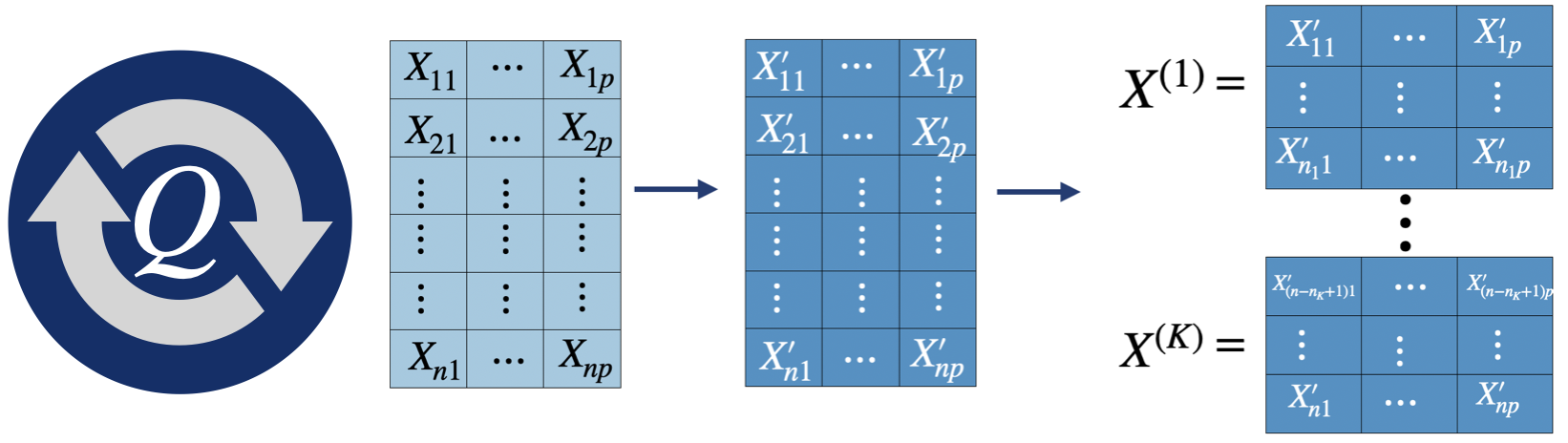}
\caption{A schematic for Proposition~\ref{prop:premult}. $Q$ represents an $n \times n$ orthogonal matrix, $X_1,\ldots,X_n$ represent independent $N_p(\mu, \Sigma)$ random vectors, and $X_1', \ldots, X_n'$ represent the rows of a  $N_{n \times p}(Q 1_n \mu^T, I_n, \Sigma)$ random matrix.}
\label{fig:prop-premult}
\end{figure}

\begin{proposition}[Decomposing $n>1$ Gaussians into $K\le n$ independent Gaussian matrices]
\label{prop:premult}
Suppose that $X \sim N_{n \times p}(1_n \mu^\top ,I_n,\Sigma)$ with  $n>1$, and  consider Algorithm~\ref{alg:premult} with $r=0$, any $K \leq n$, and 
$Q$ a $n \times n$ orthogonal  matrix.  
Then:
\begin{enumerate}
     \item $X' \sim N_{n \times p}(Q1_n \mu^\top ,I_n,\Sigma)$, i.e., the rows of $X'$ are independent realizations of a Gaussian random variable with covariance $\Sigma$. 
    \item If we further take $Q$ such that $Q1_n=1_n$,
    then $X' \sim N_{n \times p}(1_n \mu^\top, I_n, \Sigma)$, i.e. its rows are independent realizations of a $N_p(\mu, \Sigma)$ random variable.
    \end{enumerate}
\end{proposition}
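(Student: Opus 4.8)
The plan is to observe that with $r=0$ the augmentation step is vacuous, so $X^\aug = X$ and Algorithm~\ref{alg:premult} simply outputs (a partition of) the rows of $X' = QX$; consequently the choice of $K$ enters only through the Step-3 partition and is immaterial to the distribution of $X'$ asserted here, and the proposition reduces to the affine-transformation behaviour of the matrix-variate Gaussian. Concretely, I would invoke the rule that if $Z \sim N_{n\times p}(M,\Delta,\Gamma)$ and $A\in\real^{n\times n}$, then $AZ \sim N_{n\times p}(AM,\,A\Delta A^\top,\,\Gamma)$.

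First I would justify this transformation rule (or cite it). The clean route is vectorization: writing $\mathrm{vec}(Z^\top)$ for the length-$np$ vector obtained by stacking the rows of $Z$, one has $\mathrm{vec}(Z^\top)\sim N_{np}(\mathrm{vec}(M^\top),\,\Delta\otimes\Gamma)$, and left-multiplication of $Z$ by $A$ corresponds to multiplying $\mathrm{vec}(Z^\top)$ by $A\otimes I_p$. The identities $(A\otimes I_p)\,\mathrm{vec}(M^\top)=\mathrm{vec}((AM)^\top)$ and $(A\otimes I_p)(\Delta\otimes\Gamma)(A\otimes I_p)^\top=(A\Delta A^\top)\otimes\Gamma$ then read off the claimed mean and covariances; I would not spell out this Kronecker bookkeeping.

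Next I would specialize: take $Z=X$, so $M=1_n\mu^\top$, $\Delta=I_n$, $\Gamma=\Sigma$, and $A=Q$. Orthogonality of $Q$ gives $Q I_n Q^\top = QQ^\top = I_n$, the column-covariance is unchanged at $\Sigma$, and the mean becomes $Q1_n\mu^\top$, so $X'\sim N_{n\times p}(Q1_n\mu^\top, I_n,\Sigma)$. Since the row-covariance is the identity, the $n$ rows of $X'$ are mutually independent, and the $i$th row is distributed as $N_p\big((Q1_n)_i\,\mu,\,\Sigma\big)$ — in particular each is Gaussian with covariance $\Sigma$. This is Part~1. For Part~2, the extra hypothesis $Q1_n=1_n$ collapses $Q1_n\mu^\top$ to $1_n\mu^\top$, giving $X'\sim N_{n\times p}(1_n\mu^\top, I_n,\Sigma)$, i.e.\ i.i.d.\ $N_p(\mu,\Sigma)$ rows; I would also remark that such $Q$ exist in abundance — any orthogonal matrix that fixes $\mathrm{span}(1_n)$ pointwise and acts arbitrarily on its orthogonal complement (e.g.\ a Helmert-type matrix) — so the construction is non-vacuous.

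I do not anticipate a genuine obstacle: the argument is essentially a one-line consequence of the matrix-variate transformation rule. The only points deserving care are fixing a consistent convention for which argument of $N_{n\times p}(\cdot,\cdot,\cdot)$ is the row- versus column-covariance, and — if one wants a self-contained proof — carrying out the Kronecker-product manipulations underlying the transformation rule.
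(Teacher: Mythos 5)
Your proposal is correct and follows essentially the same route as the paper's proof: both reduce to the affine-transformation property of the matrix-variate Gaussian, $QX \sim N_{n\times p}(Q 1_n\mu^\top, QQ^\top, \Sigma)$, and then use orthogonality of $Q$ (and $Q1_n = 1_n$ for Part 2). The extra details you supply — the vectorization justification of the transformation rule and the remark on the existence of orthogonal $Q$ fixing $1_n$ — are correct but not needed beyond what the paper states.
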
 
Proposition~\ref{prop:premult} does not require knowledge of $\Sigma$ (or $\mu$), and it is most useful when $\Sigma$ is unknown, since then Proposition~\ref{prop:thin} cannot be applied. 

Why should we prefer Proposition~\ref{prop:premult} to sample splitting? When $n$ is small, sample splitting is extremely inflexible: for instance, when $n=3$, there are only three ways to split $X$ into two folds. By contrast, Proposition~\ref{prop:premult} provides an infinite number of ways to do so. 

The following corollary to Proposition \ref{prop:premult} connects the proposition to the data thinning framework of \cite{neufeld2023data} and \cite{dharamshi2023generalized}. It follows from the fact that $Q$ is an $n \times n$ orthogonal matrix.

\begin{corollary}
The strategy outlined in Proposition~\ref{prop:premult} satisfies the definition of data thinning in \cite{neufeld2023data} and \cite{dharamshi2023generalized}, in the sense that (i) the rows of $X'$ are independent and depend on the unknown parameters; and (ii) they can be recombined to yield the original data (since $X=Q^\top X'$).
\end{corollary}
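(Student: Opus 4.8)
The plan is to verify that the map $X \mapsto (\Xt{1},\ldots,\Xt{K})$ produced by Algorithm~\ref{alg:premult} under the choices in Proposition~\ref{prop:premult} meets each requirement in the data thinning definitions of \cite{neufeld2023data} and \cite{dharamshi2023generalized}: namely, (a) the folds are mutually independent; (b) each fold has a marginal distribution depending on the unknown parameter(s); and (c) a known deterministic function recombines the folds into the original $X$. Everything follows from Proposition~\ref{prop:premult} together with orthogonality of $Q$, so the argument is essentially bookkeeping rather than new work.

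For independence, I would invoke Proposition~\ref{prop:premult}(1): with $r=0$ we have $X^\aug = X$ and $X' = QX \sim N_{n\times p}(Q1_n\mu^\top, I_n, \Sigma)$, whose row-covariance is $I_n$, so the $n$ rows of $X'$ are mutually independent $N_p$ vectors. Step~3 of Algorithm~\ref{alg:premult} assigns each of these rows deterministically to exactly one fold, and since functions of disjoint sub-collections of independent random vectors are independent, $\Xt{1},\ldots,\Xt{K}$ are mutually independent; this gives (a) and the independence half of clause (i).

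For the parameter dependence in (b) and clause (i), note that the $i$th row of $X'$ is $N_p\big((Q1_n)_i\,\mu,\ \Sigma\big)$, a distribution that always involves $\Sigma$ and also involves $\mu$ whenever $(Q1_n)_i \neq 0$; hence each $\Xt{k}$ depends on the unknown $(\mu,\Sigma)$. For recombination in (c) and clause (ii), orthogonality of $Q$ gives $Q^\top Q = I_n$, so $Q^\top X' = Q^\top Q X = X$, and left-multiplication by the known matrix $Q^\top$ reconstructs $X$.

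The only place where care is needed --- and the closest thing to an obstacle --- is aligning these conclusions with the precise wording of the cited definitions: confirming that the data thinning framework allows an arbitrary known recombination function (here, left-multiplication by $Q^\top$) rather than requiring summation, and that ``depends on the parameter'' is read fold-by-fold. I would settle this by quoting the definition from \cite{dharamshi2023generalized} and checking the conditions against it directly. If one wishes every fold to depend on $\mu$ (not merely on $\Sigma$), one additionally restricts to $Q$ with $Q1_n = 1_n$ as in Proposition~\ref{prop:premult}(2), though this is not needed for the corollary as stated.
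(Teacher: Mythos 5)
Your proposal is correct and matches the paper's reasoning: the paper offers no separate proof beyond the remark that the corollary ``follows from the fact that $Q$ is an $n\times n$ orthogonal matrix,'' and your argument simply spells out the same three ingredients (independence of the rows of $X'$ from the identity row-covariance in Proposition~\ref{prop:premult}, parameter dependence of each row's marginal, and recombination via $X=Q^\top X'$). The extra care you take about whether each fold depends on $\mu$ versus only on $\Sigma$ is a reasonable refinement but not something the paper itself addresses.
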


\begin{remark}
\label{rem:wishart}
When $\mu$ is known, there is a very close connection between the  strategy in Proposition~\ref{prop:premult}  and thinning the Wishart family. To see this, consider the case $\mu=0$. The discussion of natural exponential families in \cite{dharamshi2023generalized} suggests that it should be possible to decompose  $W=X^\top X\sim \mathrm{Wishart}_p(n,\Sigma)$ into $K$ independent $W^{(k)}\sim \mathrm{Wishart}_p(n_k,\Sigma)$ random matrices, where $n_1+\dots+n_K=n$. 
In fact,  $\left(\Xt{1} \right)^\top \Xt{1}, \ldots, \left( \Xt{K} \right)^\top \Xt{K}$ arising from Proposition~\ref{prop:premult} are independent, and follow  $\mathrm{Wishart}_p(n_k,\Sigma)$ distributions.
\end{remark}

While related to previous work, the strategy in Proposition \ref{prop:premult} is new: i.e., it was not proposed in \cite{neufeld2023data} or \cite{dharamshi2023generalized}.

The next result explains how Proposition~\ref{prop:premult} allocates Fisher information across $\Xt{1},\ldots,\Xt{K}$. 
\begin{proposition}[Allocation of Fisher information in Proposition~\ref{prop:premult}] \label{prop:premult-fisher}
Suppose that we apply Proposition \ref{prop:premult} to $X\sim N_{n\times p}(1_n\mu^\top, I_n, \Sigma)$. Let $\Qt{k}$ denote the $(n_k \times n)$-submatrix of $Q$ such that $\Xt{k}=\Qt{k}X$. 
Then, $(1_{n}^\top Q^{(k)\top} \Qt{k} 1_n)/n$ of the Fisher information about $\mu$ and $n_k/n$ of the Fisher information about $\Sigma$ is allocated to the $k$th fold. 
\end{proposition}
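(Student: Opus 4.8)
The plan is to combine Proposition~\ref{prop:premult} with the standard fact that the Fisher information matrix of a multivariate Gaussian, in the $(\mu,\Sigma)$ parametrization, is block diagonal. First I would use that $Q$ is orthogonal, hence its rows are orthonormal, so the $n_k\times n$ submatrix $\Qt{k}$ satisfies $\Qt{k}(\Qt{k})^\top = I_{n_k}$. By Proposition~\ref{prop:premult}(1), $X' = QX \sim N_{n\times p}(Q1_n\mu^\top, I_n, \Sigma)$, and selecting the rows belonging to fold $k$ gives $\Xt{k} = \Qt{k}X \sim N_{n_k\times p}\big(\Qt{k}1_n\mu^\top,\, \Qt{k}(\Qt{k})^\top,\, \Sigma\big) = N_{n_k\times p}(\Qt{k}1_n\mu^\top, I_{n_k}, \Sigma)$. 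Thus the rows of $\Xt{k}$ are mutually independent, with the $i$th row distributed $N_p(c_{k,i}\mu, \Sigma)$, where $c_{k,i} := (\Qt{k}1_n)_i$ is the $i$th row-sum of $\Qt{k}$.

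Next I would compute the two relevant blocks of the per-observation Fisher information for $Y \sim N_p(c\mu,\Sigma)$. Writing down the log-density and differentiating, the score for $\mu$ is $c\,\Sigma^{-1}(Y - c\mu)$, so the $\mu$-block of the Fisher information equals $c^2\Sigma^{-1}$; and because the Gaussian information matrix is block diagonal in $(\mu,\Sigma)$, the $\Sigma$-block does not depend on the mean and hence equals a fixed matrix $\mathcal{I}_\Sigma$ (the per-row Fisher information about $\Sigma$), the same whether the mean is $c\mu$ or $\mu$. Summing over the independent rows of $\Xt{k}$, the Fisher information about $\mu$ in $\Xt{k}$ is $\big(\sum_{i=1}^{n_k} c_{k,i}^2\big)\Sigma^{-1} = \|\Qt{k}1_n\|^2\,\Sigma^{-1} = \big(1_{n}^\top Q^{(k)\top} \Qt{k} 1_n\big)\Sigma^{-1}$, while the Fisher information about $\Sigma$ in $\Xt{k}$ is $n_k\,\mathcal{I}_\Sigma$.

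To finish I would divide by the corresponding quantities for $X$ itself: since $X$ has $n$ independent $N_p(\mu,\Sigma)$ rows (the $c=1$ case), it carries $n\Sigma^{-1}$ of Fisher information about $\mu$ and $n\,\mathcal{I}_\Sigma$ about $\Sigma$. Hence $\Xt{k}$ carries the fraction $(1_{n}^\top Q^{(k)\top} \Qt{k} 1_n)/n$ of the information about $\mu$ and the fraction $n_k/n$ of the information about $\Sigma$, as claimed. As a consistency check, these fractions sum to one over $k$: $\sum_k Q^{(k)\top}\Qt{k} = Q^\top Q = I_n$ gives $\sum_k 1_{n}^\top Q^{(k)\top} \Qt{k} 1_n = n$, and $\sum_k n_k = n$, consistent with the fact that $\Xt{1},\ldots,\Xt{K}$ are independent disjoint row-blocks of $X'$ that jointly recover $X$.

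The only real obstacle is conceptual rather than computational: making the phrase ``proportion of Fisher information about $\mu$'' well defined when $\mu$ is vector-valued and $\Sigma$ is also unknown. This works precisely because the $\mu$-block of the Gaussian information is always a scalar multiple of $\Sigma^{-1}$ (so the per-fold informations are mutually proportional and their ratio to $n\Sigma^{-1}$ is a genuine scalar in $[0,1]$), and because the $\mu$- and $\Sigma$-blocks decouple, so that ``information about $\Sigma$'' is unambiguously the $\Sigma$-block and is mean-free. Once that block-diagonality fact is invoked, the remainder is the short direct calculation sketched above.
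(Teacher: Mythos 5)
Your proof is correct and follows essentially the same route as the paper's: identify the rows of $X'$ as independent Gaussians with means scaled by the row-sums of $Q$, note that the per-row information about $\mu$ scales as the squared row-sum while the information about $\Sigma$ is mean-free and hence unchanged, and then aggregate over the $n_k$ rows in fold $k$. Your added remarks on the block-diagonality of the Gaussian information matrix (which makes the "proportion" well defined) and the consistency check via $Q^\top Q = I_n$ are correct refinements of the same argument.
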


Proposition \ref{prop:premult} enables us to decompose $n>1$ independent realizations of a $N_p(\mu, \Sigma)$ random variable into $K \leq n$ random matrices,  consisting of $n_1,\ldots,n_K$ {\em new} independent $N_p(\mu, \Sigma)$ realizations, where $n_1+\ldots+n_K=n$. 
However, what if we want to generate more than $n$ such realizations? This is particularly critical when $n=1$, a setting that arises in  time series and spatial data applications, among others. When $p=1$ and $\mu$ is known, this is no problem: the following result follows from Example 4.3.1 of \cite{dharamshi2023generalized}. 
\begin{proposition}[Decomposing $n=1$ univariate Gaussian into $K$ independent Gammas]\label{prop:gdt}
Suppose that we observe $X\sim N(\mu,\sigma^2)$ with $\mu$ known and $\sigma^2$ unknown. Generate $Z\sim \text{Dirichlet}_K(\epsilon_1/2,\dots,\epsilon_K/2)$ where $\epsilon_1,\dots,\epsilon_K$ are positive scalars that sum to $1$, and let $(\Xt{1},\dots,\Xt{K})=(X-\mu)^2\cdot Z$. Then, $\Xt{1},\dots,\Xt{K}$ are mutually independent, and for $k=1,\dots,K$, $\Xt{k}\sim\text{Gamma}(\epsilon_k/2, 1/(2\sigma^2))$, and $\Xt{k}$ contains $\epsilon_k$ of the Fisher information about $\sigma^2$ contained in $X$.
\end{proposition}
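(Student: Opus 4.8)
The plan is to reduce this to the Gamma--Dirichlet relationship, i.e.\ the fact that a single $\mathrm{Gamma}$ variable multiplied coordinatewise by an independent Dirichlet vector yields independent Gammas; this is exactly the content of Example 4.3.1 of \cite{dharamshi2023generalized}. The first step is to note that, since $\mu$ is known, $(X-\mu)/\sigma \sim N(0,1)$, so $Y := (X-\mu)^2 \sim \mathrm{Gamma}(1/2,\, 1/(2\sigma^2))$, and $Y$ is a sufficient statistic for $\sigma^2$. Because $Z \sim \mathrm{Dirichlet}_K(\epsilon_1/2,\dots,\epsilon_K/2)$ is generated independently of $X$, the vector $(\Xt{1},\dots,\Xt{K}) = Y\cdot Z$ is precisely a $\mathrm{Gamma}(1/2,1/(2\sigma^2))$ variable scaled by an independent Dirichlet whose shape parameters sum to $1/2$; the cited result then yields that the $\Xt{k}$ are mutually independent with $\Xt{k}\sim\mathrm{Gamma}(\epsilon_k/2,\,1/(2\sigma^2))$.

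For a self-contained proof of this reduction, the second step is a change of variables. With probability one $Y>0$ and $Z$ lies in the open simplex, so $(y,z_1,\dots,z_{K-1})\mapsto\bigl(yz_1,\dots,yz_{K-1},\,y(1-\textstyle\sum_{j<K}z_j)\bigr)$ is a bijection onto $\mathbb{R}_{>0}^K$ with inverse $y=\sum_k\tilde x_k$, $z_k=\tilde x_k/\sum_j\tilde x_j$, and Jacobian determinant $y^{K-1}$. Multiplying the $\mathrm{Gamma}(1/2,1/(2\sigma^2))$ density of $Y$ by the Dirichlet density of $Z$, substituting, and dividing by $y^{K-1}$, every power of $y$ cancels (using $\sum_k(\epsilon_k/2-1)=1/2-K$) and the joint density of $(\Xt{1},\dots,\Xt{K})$ factors as $\prod_{k=1}^K \frac{(2\sigma^2)^{-\epsilon_k/2}}{\Gamma(\epsilon_k/2)}\,\tilde x_k^{\epsilon_k/2-1}e^{-\tilde x_k/(2\sigma^2)}$, a product of $\mathrm{Gamma}(\epsilon_k/2,1/(2\sigma^2))$ densities; this delivers independence and the marginal laws simultaneously.

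The third step is the Fisher information bookkeeping. For a $\mathrm{Gamma}(\alpha,\beta)$ law with fixed shape $\alpha$, the Fisher information about the rate $\beta$ is $\alpha/\beta^2$, so by the chain rule with $\beta=1/(2\sigma^2)$ this is $\alpha/\sigma^4$ units about $\sigma^2$. Taking $\alpha=1/2$ shows $X$ (equivalently its sufficient statistic $Y$) carries $1/(2\sigma^4)$ units about $\sigma^2$, while $\alpha=\epsilon_k/2$ shows $\Xt{k}$ carries $\epsilon_k/(2\sigma^4)$ units, so $\Xt{k}$ contains the fraction $\epsilon_k$ of the Fisher information in $X$ (and these fractions sum to one, consistent with the $\Xt{k}$ being independent). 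I do not anticipate a genuine obstacle: the only step requiring care is the cancellation of the powers of $y$ in the change-of-variables computation, which is routine, and everything else is either the cited Gamma--Dirichlet fact or a one-line Fisher information calculation.
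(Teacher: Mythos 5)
Your proposal is correct and follows the same route as the paper, which offers no separate proof of this proposition beyond observing that it follows from Example 4.3.1 of \cite{dharamshi2023generalized} --- i.e., exactly the Gamma--Dirichlet reduction you invoke after noting that $(X-\mu)^2\sim\mathrm{Gamma}(1/2,1/(2\sigma^2))$ is sufficient for $\sigma^2$. Your self-contained change-of-variables verification (the powers of $y$ do cancel, since $-1/2-\sum_k(\epsilon_k/2-1)-(K-1)=0$) and the Fisher information computation $\alpha/\sigma^4$ are both sound.
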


Unlike the previous decomposition results in this paper, Proposition \ref{prop:gdt}  does not produce Gaussian random variables. 
However, it does produce independent random variables that can be used to solve Applications \ref{ap:fit} and \ref{ap:test}. Unfortunately, Proposition~\ref{prop:gdt} does not extend beyond $p=1$: our next  result reveals that for $p>1$ and general unknown $\Sigma$, \emph{it is not possible} to produce multiple independent random variables --- Gaussian or otherwise ---  from a single multivariate Gaussian. 

\begin{theorem}[Impossibility of decomposing $n=1$ multivariate Gaussian into independent pieces]\label{thm:maxn}
Suppose that $X\sim N_p(\mu,\Sigma)$ where $p>1$ and $\Sigma$ is an arbitrary covariance matrix. Absent knowledge of $\Sigma$, one cannot non-trivially decompose $X$ into multiple independent random variables. 
\end{theorem}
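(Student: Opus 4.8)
The strategy is to make two reductions and then identify the obstruction with the nonexistence of low-degree Wishart distributions when $p>1$.

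I would first argue that it suffices to treat the case of known mean: take $\mu=0$, so that $\Sigma$ is the only parameter and ``non-trivial'' means each piece has a marginal law depending on $\Sigma$. (If $\mu$ is also unknown one checks that the location information cannot be split without knowing $\Sigma$ either, but the essential obstruction is the covariance one, below.) A decomposition is then a parameter-free randomized map $(T_1,\dots,T_K)=f(X,U)$ with $U$ auxiliary noise not depending on $\Sigma$, the $T_k$ mutually independent for every $\Sigma \succ 0$, and each $T_k$ non-trivial; since mutual independence gives $T_1\perp(T_2,\dots,T_K)$, it is enough to rule out $K=2$. Second, because $S:=XX^\top$ is sufficient for $\{N_p(0,\Sigma)\}$, the conditional law of $X$ given $S$ is parameter-free, so resampling $X$ from it and re-applying $f$ shows that --- without changing the joint law of $(T_1,T_2)$ as a function of $\Sigma$ --- we may take $(T_1,T_2)$ to be a randomized function of $S$ alone. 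The claim to refute becomes: $S\sim\mathrm{Wishart}_p(1,\Sigma)$ admits no independent non-trivial decomposition into two pieces, uniformly in $\Sigma$.

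Here I would appeal to the characterization of decomposable exponential families in \cite{dharamshi2023generalized} (cf.\ Remark~\ref{rem:wishart}): such a decomposition of $\mathrm{Wishart}_p(1,\Sigma)$ into $K$ pieces exists only if one can write $1=n_1+\dots+n_K$ with each $n_k>0$ (positivity forced by non-triviality) and each $\mathrm{Wishart}_p(n_k,\Sigma)$ a genuine distribution. For $K\ge 2$ some $n_k$ lies in $(0,1)$; but by Gindikin's theorem on Wishart existence, $\mathrm{Wishart}_p(n,\Sigma)$ is a probability distribution only for $n\in\{0,1,\dots,p-1\}\cup(p-1,\infty)$, and for $p>1$ the set $(0,1)$ is disjoint from this. (One also sees it directly: $T_1+T_2=S$ is rank one, so each $T_k$ is a.s.\ of rank $\le 1$, impossible for a nondegenerate $\mathrm{Wishart}_p(n_k,\Sigma)$ with $p>1$.) This contradiction proves the theorem, and explains why $p=1$ is special: $\mathrm{Wishart}_1(n,\sigma^2)=\sigma^2\chi^2_n$ exists for all $n>0$, which is exactly what makes Proposition~\ref{prop:gdt} possible.

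\textbf{Main obstacle.} The reductions are routine; the real content is the input from \cite{dharamshi2023generalized} --- passing from an \emph{arbitrary} independent non-trivial decomposition of $S$ to a convolution splitting of the rank-one Wishart (and confirming that lossy or non-additive schemes do not help). Two points explain why a soft argument will not do. First, Fisher-information bookkeeping is inconclusive: a single Gaussian carries a full-rank, $p(p+1)/2$-dimensional information matrix about $\Sigma$ that could a priori be split additively. Second, no reduction to a one-parameter sub-family can establish impossibility, since for \emph{individual} covariances a valid decomposition exists --- e.g.\ under the equicorrelation model with unit variances and correlation $\rho$, the independent statistics $(X_1+X_2)^2$ and $(X_1-X_2)^2$ both depend on $\rho$. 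The theorem asserts precisely that no single parameter-free rule can achieve this for all $\Sigma$ at once, and capturing that requires the global sufficiency/exponential-family structure rather than any fixed parametric path.
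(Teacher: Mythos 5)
Your architecture is essentially the paper's: reduce by sufficiency to $S=XX^\top\sim\mathrm{Wishart}_p(1,\Sigma)$, invoke the result of Dharamshi et al.\ that a natural exponential family can only be thinned by addition, and then show that the rank-one Wishart admits no additive split. The resampling argument (redraw $X$ from its $\Sigma$-free conditional law given $S$) is a legitimate variant of the paper's Lemma~\ref{lem:thin}, and the reduction to $K=2$ is fine. Note, though, that your definition of ``decomposition'' drops the reconstruction requirement $X=T(\Xt{1},\dots,\Xt{K})$ that the paper's thinning definition imposes and that the exponential-family theorem needs; you implicitly restore it when you write $T_1+T_2=S$.

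The genuine gap is in the last step. The exponential-family theorem gives you only that the thinning function must be \emph{addition}; it does not give you that the two independent summands are themselves $\mathrm{Wishart}_p(n_k,\Sigma)$ with $n_1+n_2=1$. Gindikin's theorem therefore does not close the argument: it rules out fractional-degree Wishart convolution factors, but says nothing about independent summands with arbitrary (non-Wishart, possibly not positive semidefinite) laws whose sum is $\mathrm{Wishart}_p(1,\Sigma)$. The same objection applies to your parenthetical rank argument, which presumes the summands are nondegenerate Wisharts; and even for PSD summands, ``rank $\le 1$'' alone is not yet a contradiction --- one must further argue that both summands then determine the common random direction $\hat X\hat X^\top$ and so cannot be independent unless one is degenerate. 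What is actually needed, and what the paper cites, is the classical fact that the singular Wishart with one degree of freedom is \emph{indecomposable}: it cannot be written as the sum of two independent non-degenerate random matrices at all (L\'evy; Shanbhag; Peddada). Substituting that result for Gindikin's theorem repairs the proof; everything else in your outline matches the paper's route.
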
 

Theorem~\ref{thm:maxn} tells us that we cannot use Algorithm \ref{alg:premult} (or \emph{any} algorithm, for that matter) to produce multiple independent folds in the important $n=1$ case. Briefly, its proof (i) establishes an equivalence between decomposing Gaussians with unknown covariance and decomposing the corresponding Wishart random matrix; (ii) shows that no non-additive operation that does not rely on knowledge of $\Sigma$ can recover a Wishart from independent pieces; and (iii) notes that the (singular) Wishart distribution with one degree of freedom is indecomposable \citep{SHANBHAG1976347, Peddada1991, srivastava2003singular}, i.e., it cannot be recovered from independent pieces using addition.

\section{Decomposing Gaussians into dependent Gaussians when $\Sigma$ is unknown}
\label{sec:fission}

\subsection{Generating dependent Gaussians with Algorithm \ref{alg:premult}}

Theorem~\ref{thm:maxn} established that when $\Sigma$ is unknown and $p>1$,  it is not possible to non-trivially decompose a single realization of a $N_p(\mu,\Sigma)$ random variable into multiple independent random variables. However, the $n=1$ setting is of critical importance, e.g., in the context of spatial or temporal data. We will now show that Algorithm~\ref{alg:premult} can be applied to decompose a single realization of a $N_p(\mu,\Sigma)$ random variable into any number of \emph{dependent} Gaussians. 

\begin{theorem}[Decomposing $n=1$ Gaussian into $K$ dependent Gaussians] \label{thm:dependent}
Suppose that $X \sim N_p( \mu,\Sigma)$, and  consider Algorithm~\ref{alg:premult} with $r \in \mathbb{Z}_+$, $K= r+1$, $\Sigma'$ a positive definite matrix selected by the data analyst, $X^\aug$ constructed so that $X$ is in its first row, and $Q$ an orthogonal matrix of dimension $K\times K$. Let $Q_X$ denote the first column of $Q$.  
Then, marginally, 
$$
\text{vec}(X'^\top) = 
    \begin{pmatrix} \Xt{1} \\ \vdots \\ \Xt{K} \end{pmatrix}
    \sim  N_{Kp}\left(\text{vec}(\mu Q_X^\top),  Q_XQ_X^\top \otimes \Sigma + (I_K-Q_XQ_X^\top) \otimes \Sigma'
    \right).
$$
\end{theorem}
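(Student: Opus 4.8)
The plan is to reduce the whole procedure to a single linear transformation of a Gaussian vector and then simplify using the Kronecker mixed‑product property. First I would record the distribution of the augmented matrix $X^\aug$ produced by Step~1. Its first row is $X\sim N_p(\mu,\Sigma)$, its remaining $r=K-1$ rows are i.i.d.\ $N_p(0,\Sigma')$, and all $K$ rows are mutually independent. Note that $X^\aug$ is \emph{not} matrix‑variate Gaussian in general, because its rows do not share a common column covariance; so instead I would work with the row‑stacked vectorization $\text{vec}(X^{\aug\top})\in\mathbb{R}^{Kp}$. This is Gaussian with mean $e_1\otimes\mu$ and block‑diagonal covariance $e_1e_1^\top\otimes\Sigma+(I_K-e_1e_1^\top)\otimes\Sigma'$, where $e_1\in\mathbb{R}^{K}$ is the first standard basis vector (the first block corresponds to $X$, since $X^\aug$ is built with $X$ in its first row).

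Next I would express the output $X'=QX^\aug$ of Step~2 in vectorized form. Applying the identity $\text{vec}(ABC)=(C^\top\otimes A)\text{vec}(B)$ to $X'^\top=X^{\aug\top}Q^\top$ gives $\text{vec}(X'^\top)=(Q\otimes I_p)\,\text{vec}(X^{\aug\top})$; note that, with $K=r+1$ and $n_1=\dots=n_K=1$ in Step~3, the $k$th block of $\text{vec}(X'^\top)$ is exactly the row $\Xt{k}$. Since $\text{vec}(X'^\top)$ is a linear image of a Gaussian vector, it is Gaussian with mean $(Q\otimes I_p)(e_1\otimes\mu)$ and covariance $(Q\otimes I_p)\big[e_1e_1^\top\otimes\Sigma+(I_K-e_1e_1^\top)\otimes\Sigma'\big](Q\otimes I_p)^\top$.

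Finally I would simplify. For the mean, the mixed‑product property gives $(Q\otimes I_p)(e_1\otimes\mu)=(Qe_1)\otimes\mu=Q_X\otimes\mu=\text{vec}(\mu Q_X^\top)$, using $Q_X=Qe_1$. For the covariance, $(Q\otimes I_p)(e_1e_1^\top\otimes\Sigma)(Q^\top\otimes I_p)=(Qe_1e_1^\top Q^\top)\otimes\Sigma=Q_XQ_X^\top\otimes\Sigma$, and similarly $(Q\otimes I_p)\big((I_K-e_1e_1^\top)\otimes\Sigma'\big)(Q^\top\otimes I_p)=(QQ^\top-Q_XQ_X^\top)\otimes\Sigma'=(I_K-Q_XQ_X^\top)\otimes\Sigma'$, where orthogonality of $Q$ supplies $QQ^\top=I_K$. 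Adding the two terms gives the stated covariance. I expect there is no genuine obstacle here — only the bookkeeping hazard of being consistent with the row‑stacking convention $\text{vec}(\cdot^\top)$ (so that the covariance blocks line up with the $p$‑dimensional pieces $\Xt{1},\dots,\Xt{K}$) and of resisting the temptation to treat $X^\aug$ as matrix‑variate normal; everything else is a direct application of the Gaussian affine‑transformation rule together with the Kronecker mixed‑product property.
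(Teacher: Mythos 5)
Your proof is correct and takes essentially the same route as the paper's: both reduce to a direct computation of the mean and covariance of a Gaussian under a linear map, using the Kronecker mixed-product property and orthogonality of $Q$. The only cosmetic difference is that you conjugate the full block-diagonal covariance of $\text{vec}(X^{\aug\top})$ by $Q\otimes I_p$ in one step, whereas the paper first splits $X'=Q_XX^\top+Q_WW$ into independent signal and noise contributions and adds their covariances.
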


\begin{remark} \label{remark:conditional}
In the special case where $r=1$ and $K=2$, letting $q_1$ and $q_2$ denote the first and second entries of $Q_X$ respectively, it follows that $\Xt{1} \sim N_p(q_1\mu,q_1^2\Sigma+(1-q_1^2)\Sigma')$ and 
\begin{align*}
\left[\Xt{2} \mid \right.&\left.\Xt{1}=\xt{1}\right] \sim N_p\left(q_2\mu+q_1q_2\left(\Sigma-\Sigma'\right)\left(q_1^2\Sigma+\left(1-q_1^2\right)\Sigma'\right)^{-1}\left(\xt{1}-q_1\mu\right)\right., \\
&\left.q_2^2\Sigma+\left(1-q_2^2\right)\Sigma'-q_1^2q_2^2\left(\Sigma-\Sigma'\right)\left(q_1^2\Sigma+\left(1-q_1^2\right)\Sigma'\right)^{-1}\left(\Sigma-\Sigma'\right)\right).
\end{align*}
Furthermore, if $Q_X=\begin{pmatrix} 1/\sqrt{2} \;\;\; & 1/\sqrt{2}\end{pmatrix}^\top$, then Theorem~\ref{thm:dependent} yields Proposition \ref{prop:fission}. 
\end{remark} 
Figure~\ref{fig:thm:dependent} provides an illustration of Theorem~\ref{thm:dependent}. 

\begin{figure}
\centering
\includegraphics[width=0.85\textwidth]{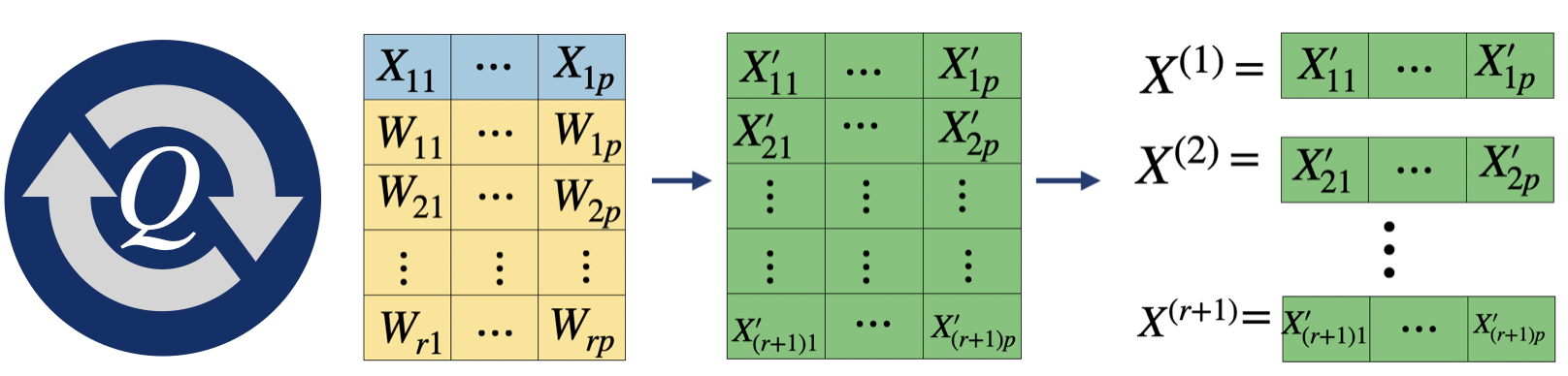}
\caption{A schematic for Theorem~\ref{thm:dependent} with $n=1$ and $K=r+1$. $Q$ represents an $n \times n$ orthogonal matrix, $X$ represent a $N_p(\mu, \Sigma)$ random vector, and $W_1, \ldots, W_r$ represent independent $N_p(0, \Sigma')$ random vectors.}
\label{fig:thm:dependent}
\end{figure}

Applying Theorem~\ref{thm:dependent} in  a practical setting incurs a number of challenges due to the dependence between $\Xt{1},\ldots,\Xt{K}$.
For instance, suppose that $K=2$: we cannot simply fit a model using $\Xt{1}$, and then validate it using $\Xt{2}$. Instead,
\cite{leiner2022data} propose to fit a model using $\Xt{1}$, and to validate it using the \emph{conditional distribution}  $\Xt{2}|\Xt{1}$. However, details of how this can be done in practice are not addressed. 

In the remainder of this section, we address the practical questions that arise in applying Theorem~\ref{thm:dependent}.  
We contextualise Theorem \ref{thm:dependent} in terms of Applications~\ref{ap:fit} and \ref{ap:test}, discuss how the Fisher information is allocated between $\Xt{1},\ldots,\Xt{K}$, and consider the choice of $\Sigma'$. 

In this section, we consider decomposing a single Gaussian into dependent folds, because when 
$n>1$, it is possible to obtain $K\le n$ \emph{independent} folds via Proposition~\ref{prop:premult} --- and of course, independent folds are more convenient for downstream analysis.  
However, in the event that $K>n$ folds are desired (for example, for cross-validation when $n$ is small), a direct application of Algorithm~\ref{alg:premult} with $r>0$ decomposes $n>1$ independent Gaussians into $K>n$ dependent folds, necessitating a slight extension of Theorem~\ref{thm:dependent}; see Supplement \ref{app:thm:dependent:nbig} for details. We recommend using  a block diagonal $Q$ matrix, as in Remark \ref{rem:bignthin}; this  amounts to applying Theorem~\ref{thm:dependent} to each row of $X$ separately.

\subsection{Revisiting Application~\ref{ap:fit} with $n=1$} \label{subsec:application-1}

In this subsection, we consider applying Theorem~\ref{thm:dependent} in the context of Application~\ref{ap:fit} with $n=1$.  We will address two issues: (1) how to handle dependence between training and test sets when $K=2$; and (2) how to handle the case of $K>2$.

First, we consider the dependence between training and test sets in the case where $K=2$. 
We can easily carry out step (i) of Application~\ref{ap:fit} by 
fitting  a model to $\Xt{1}$. But step (ii) poses a challenge: dependence between $\Xt{1}$  and $\Xt{2}$ means that we cannot simply assess the model's out-of-sample fit using the marginal distribution of $\Xt{2}$. Instead, we must conduct step (ii)  using the conditional distribution of $\Xt{2} \mid \Xt{1}$. The use of this conditional distribution in step (ii) ``accounts" for the use of $\Xt{1}$ in step (i). 
The details are as follows.

\begin{example}[Solving Application \ref{ap:fit} when $n=1$ using $K=2$ dependent folds]\label{ex:validate}
Suppose we are given a single realization of a random vector $X\sim N_p(\mu,\Sigma)$. To fit a model and assess its out-of-sample fit, we can take the following approach: 
\begin{enumerate}
    \item Apply Algorithm~\ref{alg:premult} with $r=1$, $Q$ a $2\times 2$ orthogonal matrix, and $K=2$ to decompose $X$ into $\Xt{1}$ and $\Xt{2}$. 
    \item Let $\hat\mu$ and $\hat\Sigma$ denote estimates of $\mu$ and $\Sigma$ using $\Xt{1}$, which has marginal distribution $\Xt{1}\sim N_p(q_1\mu,q_1^2\Sigma + (1-q_1^2)\Sigma')$ by Remark \ref{remark:conditional}. 
    \item Evaluate the conditional log-likelihood $\mathcal{L}(\hat\mu, \hat\Sigma;\Xt{2}|\Xt{1})$ using the conditional distribution of $\Xt{2} \mid \Xt{1}$ given in Remark~\ref{remark:conditional}. 
\end{enumerate}
\end{example}

Next, we suppose again that $n=1$, but  now $K>2$. For $k=1,\ldots, K$, we wish to fit a model to $\Xt{1},\ldots,\Xt{k-1}, \Xt{k+1}\,\ldots,\Xt{K}$, and validate it on $\Xt{k}$. However, there is a problem: it may not be straightforward or desirable to apply a model designed for a single Gaussian vector, $X$, to $K-1$ dependent Gaussian vectors. 
The following corollary enables us to  collapse subsets of $\Xt{1},\dots,\Xt{K}$ into vectors, without losing information about $\mu$ or $\Sigma$. 

\begin{corollary}
\label{cor:collapse}
Consider the setting of Theorem \ref{thm:dependent} where $n=1$. Suppose that $A$ and $B$ form a non-overlapping subset of $\{1,\ldots K\}$: that is, $A\cup B \subseteq \{1,\ldots,K \}$ and $A\cap B = \emptyset$. 
 Let $Q_A$ be a length $K$ vector where the $k$th entry equals the $k$th entry of $Q_X$ if $k\in A$ and $0$ otherwise, and define $Q_B$ similarly. 
Define $\Xt{A}=X'^\top Q_A$ and $\Xt{B}=X'^\top Q_B$. For notational convenience, let $d_A = Q_A^\top Q_A$ and $d_B = Q_B^\top Q_B$. Then,
\begin{align*}
\begin{pmatrix} \Xt{A} \\ \Xt{B} \end{pmatrix} &\sim N_{2p} \left(
\begin{pmatrix} d_A\mu  \\ d_B\mu \end{pmatrix}, 
\begin{pmatrix}
d_A^2\Sigma + d_A(1-d_A)\Sigma' &
d_Ad_B(\Sigma -\Sigma')  \\
d_Ad_B(\Sigma -\Sigma') &
d_B^2\Sigma + d_B(1-d_B) \Sigma'
\end{pmatrix}
\right), \;\; and 
\end{align*}
\begin{align*}
\left[\Xt{B}|\right.&\left.\Xt{A}=\xt{A}\right]\sim N_p\left(d_B\mu + d_B\left(\Sigma -\Sigma'\right)\left(d_A\Sigma + (1-d_A)\Sigma'\right)^{-1}\left(\xt{A}-d_A\mu\right),\right. \\
&\left.d_B^2\Sigma + d_B(1-d_B)\Sigma' - d_Ad_B^2\left(\Sigma -\Sigma'\right)\left(d_A\Sigma + (1-d_A)\Sigma'\right)^{-1}\left(\Sigma -\Sigma'\right)\right). 
\end{align*} 

Furthermore, since $\Xt{A} + \Xt{B} = X$, all of the Fisher information about $\mu$ and $\Sigma$ in $X$ is retained in $\Xt{A}$ and $\Xt{B}$.
\end{corollary}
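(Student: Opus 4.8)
The plan is to derive Corollary~\ref{cor:collapse} directly from Theorem~\ref{thm:dependent}, using only that $(\Xt{A},\Xt{B})$ is a linear image of $\text{vec}(X'^\top)$ and that linear images of Gaussians are Gaussian. Writing the relevant linear combinations in Kronecker form, $\Xt{A}=X'^\top Q_A=(Q_A^\top\otimes I_p)\,\text{vec}(X'^\top)$ and similarly $\Xt{B}=(Q_B^\top\otimes I_p)\,\text{vec}(X'^\top)$, so that $\begin{pmatrix}\Xt{A}\\\Xt{B}\end{pmatrix}=M\,\text{vec}(X'^\top)$ with $M=\begin{pmatrix}Q_A^\top\otimes I_p\\ Q_B^\top\otimes I_p\end{pmatrix}$. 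Theorem~\ref{thm:dependent} (with $n=1$) gives $\text{vec}(X'^\top)\sim N_{Kp}\bigl(\text{vec}(\mu Q_X^\top),\,\Lambda\bigr)$ with $\Lambda=Q_XQ_X^\top\otimes\Sigma+(I_K-Q_XQ_X^\top)\otimes\Sigma'$, hence $(\Xt{A},\Xt{B})$ is jointly Gaussian with mean $M\,\text{vec}(\mu Q_X^\top)$ and covariance $M\Lambda M^\top$. The rest of the proof is simply the evaluation of these two expressions.

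The evaluation collapses to three scalar inner products. Since the supports of $Q_A$ and $Q_B$ are disjoint, $Q_A^\top Q_B=0$; since $Q_A$ agrees with $Q_X$ on its support and vanishes off it, $Q_A^\top Q_X=Q_A^\top Q_A=d_A$, and symmetrically $Q_B^\top Q_X=Q_B^\top Q_B=d_B$. Using $\text{vec}(\mu Q_X^\top)=(Q_X\otimes I_p)\mu$ and the mixed-product property $(A\otimes B)(C\otimes D)=(AC)\otimes(BD)$, the mean blocks become $(Q_A^\top Q_X)\mu=d_A\mu$ and $(Q_B^\top Q_X)\mu=d_B\mu$. For the covariance, the $(A,A)$ block of $M\Lambda M^\top$ is $(Q_A^\top Q_XQ_X^\top Q_A)\Sigma+(Q_A^\top Q_A-Q_A^\top Q_XQ_X^\top Q_A)\Sigma'=d_A^2\Sigma+d_A(1-d_A)\Sigma'$; the $(B,B)$ block is analogous; and the $(A,B)$ block is $(Q_A^\top Q_XQ_X^\top Q_B)\Sigma+(Q_A^\top Q_B-Q_A^\top Q_XQ_X^\top Q_B)\Sigma'=d_Ad_B\Sigma-d_Ad_B\Sigma'$, where the $\Sigma'$ contribution collapses to $-d_Ad_B$ precisely because $Q_A^\top Q_B=0$. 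This yields the stated joint distribution.

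The conditional distribution of $\Xt{B}\mid\Xt{A}=\xt{A}$ then follows by substituting this mean and covariance into the textbook Gaussian conditioning formulas, after writing $\Sigma_{AA}=d_A\bigl(d_A\Sigma+(1-d_A)\Sigma'\bigr)$ so that one factor of $d_A$ cancels against the $d_Ad_B$ appearing in the cross-covariance block; this reproduces the conditional mean and variance displayed in the statement. For the Fisher-information claim, when $A$ and $B$ together exhaust $\{1,\dots,K\}$ we have $Q_A+Q_B=Q_X$, and since $Q$ is orthogonal and $X$ occupies the first row of $X^\aug$, $\Xt{A}+\Xt{B}=X'^\top Q_X=(X^\aug)^\top Q^\top Q_X=(X^\aug)^\top e_1=X$, where $e_1$ is the first standard basis vector. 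Thus $X$ is a deterministic function of $(\Xt{A},\Xt{B})$, and the same sufficiency argument used in the remarks following Algorithm~\ref{alg:premult} shows that no Fisher information about $(\mu,\Sigma)$ is lost.

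I do not anticipate a genuine obstacle here: the content is linear-algebraic bookkeeping with $\text{vec}$ and Kronecker products together with standard Gaussian conditioning, all downstream of Theorem~\ref{thm:dependent}. The only points that warrant care are keeping the $\text{vec}$/Kronecker conventions mutually consistent throughout, recognizing that the disjointness $Q_A^\top Q_B=0$ is exactly what makes the cross-covariance simplify to $d_Ad_B(\Sigma-\Sigma')$ rather than something messier, and noting that the identity $\Xt{A}+\Xt{B}=X$ — and hence the closing Fisher-information remark — requires $A\cup B=\{1,\dots,K\}$ (the case relevant for the cross-validation use, $A=\{k\}$ and $B=\{1,\dots,K\}\setminus\{k\}$).
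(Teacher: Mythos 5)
Your proof is correct and follows exactly the route the paper intends (the Supplement gives no separate proof of this corollary, treating it as a direct consequence of Theorem~\ref{thm:dependent} via the linear map $(Q_A^\top\otimes I_p,\,Q_B^\top\otimes I_p)$ and standard Gaussian conditioning), and your computations of the mean and covariance blocks all check out. Your observation that the closing identity $\Xt{A}+\Xt{B}=X$, and hence the Fisher-information claim, requires $A\cup B=\{1,\dots,K\}$ rather than merely $A\cup B\subseteq\{1,\dots,K\}$ is a correct and worthwhile clarification of the statement.
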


Thus, rather than fitting a model to  $\Xt{1},\ldots,\Xt{k-1}, \Xt{k+1}\,\ldots,\Xt{K}$ --- a task that may be challenging if the model is designed for a single vector --- Corollary~\ref{cor:collapse} enables us to collapse these $K-1$ vectors into a single vector. We can then fit a model to this single vector. 
 Thus, Example~\ref{ex:validate} can be easily extended to allow for cross-validation.

\subsection{Revisiting Application~\ref{ap:test} with $n=1$}
\label{subsec:application-2}

We now briefly consider Application~\ref{ap:test}. Taking $K=2$, we simply select a hypothesis using $\Xt{1}$, and test it using $\Xt{2} \mid \Xt{1}$.

\begin{example}[Solving Application \ref{ap:test} when $n=1$  using $K=2$ dependent folds] 
\label{ex:test}
Suppose we are given a single realization of a random vector $X\sim N_p(\mu,\Sigma)$. To generate a hypothesis involving $\mu$ and/or $\Sigma$ on the data, and then to confirm (or reject) it on the same data, we can take the following approach:
\begin{enumerate}
    \item Apply Algorithm~\ref{alg:premult} with $r=1$, $Q$ a $2\times 2$ orthogonal matrix, and $K=2$ to decompose $X$ into $\Xt{1}$ and $\Xt{2}$. 
    \item Select a hypothesis involving $\mu$ and/or $\Sigma$ using the data $\Xt{1}$, which has marginal distribution $\Xt{1}\sim N_p(q_1\mu,q_1^2\Sigma + (1-q_1^2)\Sigma')$ by Remark \ref{remark:conditional}. 
    \item Test the selected hypothesis  using the conditional distribution of $\Xt{2}|\Xt{1}$ given in Remark~\ref{remark:conditional}. The details of this test are context-specific.
\end{enumerate}
\end{example}

\subsection{Allocation of Fisher information  across folds}
\label{subsec:fission-fisher}
  
The next result specifies the allocation of Fisher information in $\Xt{1}$ versus in $\Xt{2} \mid \Xt{1}$. 

\begin{theorem}[Fisher information allocation from Corollary \ref{cor:collapse}]
\label{thm:fisher} %
Suppose that we apply Algorithm~\ref{alg:premult} to $X\sim N_{n\times p}(1_n\mu(\theta)^\top,I_n,\Sigma(\phi))$ with $n=1$, $r=1$, $K=2$, $\Sigma'$ a positive definite matrix selected by the data analyst, and $Q$ an orthogonal matrix of dimension $2\times 2$, where $\theta$ is the unknown parameter vector of length $M \le p$ and $\phi$ is the unknown parameter vector of length $N \le p(p+1)/2$ that characterise the mean and covariance, respectively. 

Then, letting $q_1$ denote the top-left element of $Q$, the Fisher information about $\theta$ and $\phi$ in $\Xt{1}$ and $\Xt{2}|\Xt{1}$ is as follows: 
\begin{align*}
    \left[I_{\Xt{1}}(\theta)\right]_{ii'} &= q_1^2\frac{\partial \mu(\theta)^\top}{\partial \theta_i}\left(q_1^2\Sigma(\phi)+(1-q_1^2)\Sigma'\right)^{-1}\frac{\partial \mu(\theta)}{\partial \theta_{i'}} \\ 
    \left[I_{\Xt{2}|\Xt{1}}(\theta)\right]_{ii'} &= \frac{\partial \mu(\theta)^\top}{\partial \theta_i}\Sigma(\phi)^{-1}\frac{\partial \mu(\theta)}{\partial \theta_{i'}} - \left[I_{\Xt{1}}(\theta)\right]_{ii'}, \\
    \\
    \left[I_{\Xt{1}}(\phi)\right]_{jj'} &= \frac{1}{2}q_1^4 \trace\left[\left(q_1^2\Sigma(\phi)+(1-q_1^2)\Sigma'\right)^{-1} \frac{\partial \Sigma(\phi)}{\partial \phi_j} \left(q_1^2\Sigma(\phi)+(1-q_1^2)\Sigma'\right)^{-1} \frac{\partial \Sigma(\phi)}{\partial \phi_{j'}} \right], \\
    \left[I_{\Xt{2}|\Xt{1}}(\phi)\right]_{jj'} &= \frac{1}{2}\trace\left[\Sigma(\phi)^{-1}\frac{\partial \Sigma(\phi)}{\partial \phi_j} \Sigma(\phi)^{-1}\frac{\partial \Sigma(\phi)}{\partial \phi_{j'}} \right] - \left[I_{\Xt{1}}(\phi)\right]_{jj'}, 
\end{align*}
where $1\le i,i' \le M$, and $1\le j,j' \le N$. 
\end{theorem}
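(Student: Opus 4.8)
The plan is to derive all four displayed formulas from the classical chain rule for Fisher information applied to the pair $(\Xt{1},\Xt{2})$, together with the observation that this pair carries exactly the same Fisher information as $X$ itself; throughout I read the symbols $I_{\Xt{2}\mid\Xt{1}}(\cdot)$ in the statement as the expected conditional Fisher informations $\mathbb{E}_{\Xt{1}}\!\left[I_{\Xt{2}\mid\Xt{1}=\xt{1}}(\cdot)\right]$. The first ingredient: since $Q$ is a $2\times 2$ orthogonal matrix, $X'=QX^{\aug}$ is a bijective linear image of the augmented matrix $X^{\aug}$, whose rows are $X$ and the Step~1 noise vector $W\sim N_p(0,\Sigma')$; hence $(\Xt{1},\Xt{2})$ and $(X,W)$ generate the same information, and because $W$ is independent of $X$ with a law involving neither $\theta$ nor $\phi$, $I_{(\Xt{1},\Xt{2})}(\psi)=I_{X^{\aug}}(\psi)=I_X(\psi)+I_W(\psi)=I_X(\psi)$ for $\psi\in\{\theta,\phi\}$ (Corollary~\ref{cor:collapse} records the complementary ``no information lost'' fact that $X$ is recoverable from $\Xt{1},\Xt{2}$). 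The second ingredient is the Fisher-information chain rule, $I_{(\Xt{1},\Xt{2})}(\psi)=I_{\Xt{1}}(\psi)+\mathbb{E}_{\Xt{1}}\!\left[I_{\Xt{2}\mid\Xt{1}}(\psi)\right]$.

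The third ingredient is the standard fact that for any $N_p(m(\alpha),V(\beta))$ whose mean depends only on $\alpha$ and whose covariance depends only on $\beta$, the information matrix is block diagonal in $(\alpha,\beta)$, with $[I(\alpha)]_{ii'}=\frac{\partial m^{\top}}{\partial\alpha_i}V^{-1}\frac{\partial m}{\partial\alpha_{i'}}$ and $[I(\beta)]_{jj'}=\frac{1}{2}\trace\!\left(V^{-1}\frac{\partial V}{\partial\beta_j}V^{-1}\frac{\partial V}{\partial\beta_{j'}}\right)$. Applying this to $X\sim N_p(\mu(\theta),\Sigma(\phi))$ gives the ``full'' terms $\frac{\partial\mu^{\top}}{\partial\theta_i}\Sigma^{-1}\frac{\partial\mu}{\partial\theta_{i'}}$ and $\frac{1}{2}\trace(\Sigma^{-1}\frac{\partial\Sigma}{\partial\phi_j}\Sigma^{-1}\frac{\partial\Sigma}{\partial\phi_{j'}})$ that appear in the statement, and applying it to the marginal law $\Xt{1}\sim N_p\!\left(q_1\mu(\theta),\,M(\phi)\right)$ of Remark~\ref{remark:conditional}, where $M(\phi):=q_1^2\Sigma(\phi)+(1-q_1^2)\Sigma'$ has $\frac{\partial M}{\partial\phi_j}=q_1^2\frac{\partial\Sigma}{\partial\phi_j}$, gives the first and third displayed formulas --- the $q_1^2$ and $q_1^4$ factors emerging from differentiating $q_1\mu$ and $M$. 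Finally, rearranging the chain rule into $\mathbb{E}_{\Xt{1}}\!\left[I_{\Xt{2}\mid\Xt{1}}(\psi)\right]=I_X(\psi)-I_{\Xt{1}}(\psi)$ and substituting the first ingredient yields the second and fourth displayed formulas verbatim.

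I expect the first ingredient --- certifying that $(\Xt{1},\Xt{2})$ neither loses nor gains Fisher information relative to $X$ --- to be the only point needing genuine care; the loss-free direction is Corollary~\ref{cor:collapse}, and the gain-free direction is the ancillarity of $W$. I would also verify the routine regularity conditions required to invoke the chain rule and to differentiate under the integral sign: these hold because every density in sight is Gaussian with a smooth, everywhere positive definite covariance in the parameters (the conditional covariance of $\Xt{2}\mid\Xt{1}$ is positive definite because the joint covariance of $(\Xt{1},\Xt{2})$ is an orthogonal conjugation of the block-diagonal matrix with diagonal blocks $\Sigma$ and $\Sigma'$), with the degenerate choices $q_1^2\in\{0,1\}$ --- the trivial all-or-nothing splits --- handled by inspection. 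As a closing remark, one can read off from Remark~\ref{remark:conditional} that $I_{\Xt{2}\mid\Xt{1}=\xt{1}}(\theta)$ is actually free of $\xt{1}$, because the conditional mean is affine in $\xt{1}$ with a $\theta$-free slope and the conditional covariance is constant; this explains why no expectation appears in the $\theta$ rows of the statement, whereas for $\phi$ the conditional information truly depends on $\xt{1}$ and only its $\Xt{1}$-average is determined.
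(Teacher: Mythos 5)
Your proposal is correct and follows essentially the same route as the paper: compute $I_{\Xt{1}}$ directly from the marginal Gaussian law of $\Xt{1}$ via the standard mean/covariance Fisher information formulas, then obtain $I_{\Xt{2}\mid\Xt{1}}$ by subtracting from the total information in $X$, using the additivity $I_X = I_{\Xt{1}} + I_{\Xt{2}\mid\Xt{1}}$ and the fact that Algorithm~\ref{alg:premult} preserves Fisher information. Your write-up is in fact somewhat more careful than the paper's, which simply asserts the information-preservation step rather than deriving it from the orthogonality of $Q$ and the ancillarity of the noise $W$ as you do.
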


It is clear from Theorem~\ref{thm:fisher} that the amount of Fisher information allocated to $\Xt{1}$ versus  $\Xt{2} \mid \Xt{1}$  is a complicated function of $Q$ and $\Sigma'$, and depends on the unknown parameters. Thus, it cannot be finely controlled.  However, careful choices of $Q$ and $\Sigma'$ can produce desirable allocations: for example, in the setting of Theorem \ref{thm:fisher}, if $q_1^2=1/2$, then fitting a model to $\Xt{1}$ and validating it on $\Xt{2}|\Xt{1}$ leads to an equivalent allocation of information as fitting a model to $\Xt{2}$ and validating it on $\Xt{1}|\Xt{2}$. The following subsection, and Supplement \ref{app:tuning}, offer guidelines for selecting these hyperparameters.

\subsection{Computational considerations in the choice of $\Sigma'$} \label{subsec:fission-computation}

Step 1 of Examples~\ref{ex:validate} and \ref{ex:test} involves Algorithm~\ref{alg:premult}, which requires the user to choose a $p \times p$ positive definite matrix $\Sigma'$. We will now see that the choice of $\Sigma'$ has important computational implications for Steps 2 and 3: in short, we recommend choosing it to be a multiple of the identity.

Step 2 of Examples~\ref{ex:validate} and \ref{ex:test} involves estimating $\Sigma$ from $\Xt{1} \sim N_p(q_1\mu,q_1^2\Sigma+(1-q_1^2)\Sigma')$. For an arbitrary $\Sigma'$, this may be challenging: a poorly chosen $\Sigma'$ may disrupt the structure of the covariance model such that fitting a model to $\Xt{1}$ is substantially more challenging or computationally expensive than fitting a model to $X$. For instance, if $X$ follows an autoregressive model, then $\Xt{1}$ will not in general follow an autoregressive model, and therefore, standard tools for fitting autoregressive models cannot be applied to $\Xt{1}$.

Fortunately, in the special case that $\Sigma' = \text{diag}(\sigma_1'^{2},\dots,\sigma_p'^2)$,  where $\sigma_j'^2>0$ are chosen by the data analyst, the situation simplifies: in this case,
\begin{equation}
    \Xt{1}\sim N_p\left(q_1\mu,q_1^2\Sigma + (1-q_1^2)\text{diag}(\sigma_1'^2,\dots,\sigma_p'^2)\right). \label{eq:lgm1}
    \end{equation}
In fact, this can be re-written as a latent Gaussian model  with an independent observation layer: 
\begin{equation}
    \Xt{1}_j|Y \sim N\left(q_1 Y_j, (1-q_1^2)\sigma_j'^2\right),\;\;  j=1,\dots,p; \quad\quad\quad
    Y \sim N_p(\mu, \Sigma). \label{eq:lgm2}
\end{equation}
Latent Gaussian models of the form in \eqref{eq:lgm2} are a well-studied class of problems (see \citealt{rueheld, durbin2012time}), and fast algorithms are available to fit these models (e.g., the integrated nested Laplace approximation  method of \citealt{rue2009approximate}). Thus, if $\Sigma'$ is chosen to be diagonal, then it is straightforward to extend a model for $X$ to $\Xt{1}$.  

Furthermore, suppose that our model involves structural constraints on the unknown matrix $\Sigma$ under which an eigendecomposition can be efficiently computed; an autoregressive model is one such example. Then, setting 
$\Sigma' = \sigma'^2I_p$ may lead to substantial additional computational benefits in computing the likelihoods of $\Xt{1}$ \emph{and} $\Xt{2}| \Xt{1}$ with respect to candidate values for $\Sigma$ (here denoted $\Sigma^*$). To see this, note that if we let  $\Sigma^*=P\Lambda P^\top$ denote the eigendecomposition of a  candidate value  $\Sigma^*$, then 
\begin{align}
P^\top\Xt{1} \sim N_p(&q_1P^\top\mu,q_1^2\Lambda + (1-q_1^2)\sigma'^2I_p), \label{eq:fastx1} \\
\left[P^\top\Xt{2}|\Xt{1}=\xt{1}\right] \sim N_p(&q_2P^\top\mu + q_1q_2(\Lambda-\sigma'^2I_p)(q_1^2\Lambda + (1-q_1^2)\sigma'^2I_p)^{-1}P^\top(\xt{1}-q_1\mu),\nonumber \\
q_2^2\Lambda+(1-q_2^2)\sigma'^2I_p&-q_1^2q_2^2(\Lambda-\sigma'^2I_p)(q_1^2\Lambda+(1-q_1^2)\sigma'^2I_p)^{-1}(\Lambda-\sigma'^2I_p)). \label{eq:fastx2x1}
\end{align} 
Since the covariances in \eqref{eq:fastx1} and \eqref{eq:fastx2x1} are diagonal by construction, all required likelihood evaluations can 
be performed with univariate Gaussian computations.  
As we will see in Section \ref{sec:sims}, the reduction given in \eqref{eq:fastx2x1} is particularly relevant for Example \ref{ex:test}, as constructing a test statistic often requires numerically optimizing the likelihood of $\Xt{2}|\Xt{1}$ with respect to $\Sigma^*$, which may otherwise be quite challenging.

We note that \eqref{eq:lgm2} holds for any diagonal choice of $\Sigma'$, and that \eqref{eq:fastx1} and \eqref{eq:fastx2x1} hold for any $\Sigma'$ that is a positive multiple of the identity matrix. The specific values on the diagonal have implications for the allocation of Fisher information between the dependent folds, as the equations in Theorem \ref{thm:fisher} are functions of $\Sigma'$. We discuss these implications in greater detail in Supplement~\ref{app:tuning}. 

\section{Extending Algorithm~\ref{alg:premult} to Gaussian processes}
\label{sec:GP}

In this section, we extend Algorithm~\ref{alg:premult} to the case of \emph{infinite}-dimensional Gaussian processes, such as arise in the context of probabilistic machine learning, Gaussian process regression, and continuously-indexed spatial data \citep{williams2006gaussian, schabenberger2017statistical}. To do this, we simply substitute the Gaussian noise vectors in Algorithm~\ref{alg:premult} for Gaussian noise processes. We begin with some notation. We will write $GP(\mu, C)$ to refer to the Gaussian process with mean function $\mu(t)$ and covariance function $C(t,t')$, where  $t,t'\in T$ for an  index set $T$. Here $\mu(t)$ and $C(t,t')$ are unknown functions. Our data take the form of $n$ independent Gaussian processes, $X_1(t),\ldots,X_n(t) \sim GP(\mu, C)$. 

\begin{algo}[The ``general algorithm" for decomposing $X_1(t),\ldots,X_n(t) \sim GP(\mu, C)$ into $K$ parts]
\label{alg:premult-GP} \textcolor{white}{.}\\

\indent \emph{Input:} a positive integer $K$, a nonnegative integer $r$, a $(n+r)\times (n+r)$ orthogonal matrix $Q$, and a positive definite covariance function $C'(t,t')$ for all $t, t' \in T$. 

    \begin{enumerate}
    \item Generate $W_1(t),\ldots,W_r(t) \sim GP(0,C')$, i.e.,  $r$ independent realizations of a $GP(0,C')$ Gaussian process.
    \item For $i=1,\ldots,n+r$, construct the Gaussian process $X'_i(t) = \sum_{j=1}^{n} Q_{ij}X_j(t) + \sum_{j=1}^r Q_{i(n+j)} W_j(t)$,  where $Q_{ij}$ refers to the $ij$th entry of $Q$. 
    \item Deterministically partition $X'_1(t),\ldots,X'_{n+r}(t)$ into  $K \le n+r$ subsets, $\Xt{1}(t),\dots,\Xt{K}(t)$, of size $n_k$, respectively, where $n_1+\dots+n_K=n+r$.
    \end{enumerate}
    
    \indent \emph{Output:} $\Xt{1}(t),\dots,\Xt{K}(t)$.
\end{algo}

All of the finite-dimensional Gaussian decomposition strategies in the flowchart in Figure~\ref{fig:flowchart} extend naturally to the setting of Gaussian processes.
The remainder of this section focuses on the case where $n=1$, as this is the setting that most
commonly arises. 

In the event that $C$ is known, the next proposition, which mirrors Proposition \ref{prop:premult} in the finite-dimensional setting, shows that one can decompose a single Gaussian process into $K$ independent Gaussian processes.

\begin{proposition}[Decomposing $n=1$ Gaussian process into $K$ independent Gaussian processes] \label{prop:GPthin} Suppose that $X(t) \sim GP(\mu,C)$ with $C$ known, and consider Algorithm~\ref{alg:premult-GP} with $r\in\mathbb{Z}_+$, $K=r+1$, $C'=C$, and $Q$ the $(r+1)\times(r+1)$ orthogonal matrix given in \eqref{eq:Qthin}. Then, $\Xt{1}(t),\dots,\Xt{K}(t)$ are independent Gaussian processes, and marginally, $\Xt{k}(t)\sim GP(\sqrt{\epsilon_k}\mu,C)$ for $k=1,\dots,K$.
\end{proposition}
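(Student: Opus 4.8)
The plan is to reduce this infinite-dimensional statement to the finite-dimensional Gaussian data thinning result of Proposition~\ref{prop:thin}, exploiting the fact that a Gaussian process --- and joint independence of a finite collection of Gaussian processes --- is completely determined by its finite-dimensional distributions. So the bulk of the work is to check, for an arbitrary finite set of indices, that the pointwise evaluations of $\Xt{1}(t),\ldots,\Xt{K}(t)$ behave exactly as claimed, and then to invoke a standard measure-theoretic argument to pass from finite-dimensional marginals back to statements about the processes themselves.

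Concretely, I would fix arbitrary distinct indices $t_1,\ldots,t_m \in T$ and set $x = (X(t_1),\ldots,X(t_m))^\top$ and $w_j = (W_j(t_1),\ldots,W_j(t_m))^\top$ for $j=1,\ldots,r$. Since $X(t)\sim GP(\mu,C)$ and $W_1(t),\ldots,W_r(t)\sim GP(0,C')$ are mutually independent and $C'=C$, the vectors $x, w_1,\ldots,w_r$ are jointly Gaussian and mutually independent, with $x\sim N_m(\mu_m, C_m)$ and $w_j \sim N_m(0, C_m)$, where $\mu_m = (\mu(t_i))_{i=1}^m$ and $C_m = (C(t_i,t_j))_{i,j=1}^m$ is positive definite. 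Stacking $x^\top, w_1^\top,\ldots,w_r^\top$ as the rows of an $(r+1)\times m$ matrix $X^\aug$, Step~2 of Algorithm~\ref{alg:premult-GP} evaluated at $t_1,\ldots,t_m$ reads exactly $X' = QX^\aug$, so the rows of $QX^\aug$ are $(\Xt{k}(t_1),\ldots,\Xt{k}(t_m))$ for $k=1,\ldots,K$ (with $n_1=\cdots=n_K=1$). This is precisely Algorithm~\ref{alg:premult} applied with $n=1$, dimension $m$ in place of $p$, the given $r=K-1$, $\Sigma'=C_m$ equal to the (known) covariance of $x$, and $Q$ from \eqref{eq:Qthin}; hence Proposition~\ref{prop:thin} applies directly and yields that $(\Xt{1}(t_1),\ldots,\Xt{1}(t_m)),\ldots,(\Xt{K}(t_1),\ldots,\Xt{K}(t_m))$ are mutually independent with $(\Xt{k}(t_1),\ldots,\Xt{k}(t_m))^\top \sim N_m(\sqrt{\epsilon_k}\mu_m, C_m)$ for each $k$.

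Since $m$ and the $t_i$ were arbitrary, these are exactly the finite-dimensional distributions of a $GP(\sqrt{\epsilon_k}\mu, C)$, and all joint finite-dimensional marginals of $(\Xt{1}(t),\ldots,\Xt{K}(t))$ factor across $k$; a routine $\pi$--$\lambda$ argument then upgrades this to mutual independence of the processes $\Xt{1}(t),\ldots,\Xt{K}(t)$, which completes the proof. I expect the only genuine subtlety to be this last passage from finite-dimensional statements to statements about the processes (together with the companion observation that $X$ and the $W_j$ may be realized jointly, so that their finite evaluations are jointly Gaussian); the identification with Algorithm~\ref{alg:premult} in the second step requires only careful bookkeeping --- the order of the augmented rows, the role of $\Sigma'=C_m$, and the correspondence $K=r+1 \Leftrightarrow r=K-1$ --- and everything else is an immediate translation of Proposition~\ref{prop:thin}.
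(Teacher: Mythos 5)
Your proof is correct and takes essentially the same route as the paper's: both reduce the claim to arbitrary finite-dimensional evaluations, where the orthogonality of $Q$ (equivalently, Proposition~\ref{prop:thin}) yields the stated marginals and mutual independence, and then pass back to the processes via their finite-dimensional distributions. The only cosmetic difference is that you invoke Proposition~\ref{prop:thin} for the finite-dimensional step, whereas the paper computes the marginal laws and the zero cross-covariances between folds directly.
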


The next result mirrors Theorem~\ref{thm:dependent} and Remark \ref{remark:conditional}: that is, it enables us to decompose a single Gaussian process into $K$ dependent Gaussian processes, and to characterize their dependence. 

\begin{theorem}[Decomposing $n=1$ Gaussian process into $K$ dependent Gaussian processes]\label{thm:dependent-GP}
Suppose that $X(t)\sim GP(\mu,C)$, and consider Algorithm \ref{alg:premult-GP} with $r\in\mathbb{Z}_+$, $K=r+1$, $C'$ a positive definite covariance function selected by the data analyst, and $Q$ an orthogonal matrix of dimension $(r+1)\times(r+1)$. Let $Q_X$ denote the first column of $Q$. Let $\Xt{1}(t),\dots,\Xt{K}(t)$ denote the output of  Algorithm~\ref{alg:premult-GP}. Then, each $\Xt{k}(t)$ is marginally a Gaussian process,
$$
\Xt{k}(t) \sim GP(q_k\mu, q_k^2C+(1-q_k^2)C'),
$$
where $q_k$ is the $k$th entry of $Q_X$.

Further, suppose that $K=2$. Then, for any finite index set $T_d=\{t_1,\dots,t_d\}\in T$, the conditional distribution of $(\Xt{2}(t_1),\dots,\Xt{2}(t_d))$ given $(\Xt{1}(t_1),\dots,\Xt{1}(t_d))$  is 
\begin{align}
&N_d\left(q_2\mu_{T_d} + q_1q_2\left(\Sigma_{T_d}-\Sigma'_{T_d}\right)\left(q_1^2\Sigma_{T_d} + \left(1-q_1^2\right)\Sigma'_{T_d}\right)^{-1}\left(\left(\xt{1}_{t_1},\dots,\xt{1}_{t_d}\right)-q_1\mu_{T_d}\right)\right., 
\nonumber \\
&\left.q_2^2\Sigma_{T_d}+\left(1-q_2^2\right)\Sigma'_{T_d}-q_1^2q_2^2\left(\Sigma_{T_d}-\Sigma'_{T_d}\right)\left(q_1^2\Sigma_{T_d} + \left(1-q_1^2\right)\Sigma'_{T_d}\right)^{-1}\left(\Sigma_{T_d} - \Sigma'_{T_d}\right)\right),\label{eq:GP-K2}
\end{align}
where $\mu_{T_d}$ is the vector constructed by evaluating $\mu(t)$ at every point in $T_d$, and $\Sigma_{T_d}$ and $\Sigma'_{T_d}$ are the covariance matrices constructed by evaluating $C(t,t')$ and $C'(t,t')$, respectively, at every pair of points in $T_d$.
\end{theorem}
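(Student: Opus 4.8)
The plan is to establish the result by reducing it to finite-dimensional marginal distributions, where Theorem~\ref{thm:dependent} applies essentially verbatim. First, fix an arbitrary finite set of indices $T_d = \{t_1,\dots,t_d\} \subseteq T$. Evaluating the data process and the $r$ noise processes at these points produces $X_{T_d} := (X(t_1),\dots,X(t_d))^\top \sim N_d(\mu_{T_d}, \Sigma_{T_d})$ and, independently of it and of each other, the vectors $(W_j(t_1),\dots,W_j(t_d))^\top \sim N_d(0, \Sigma'_{T_d})$ for $j = 1,\dots,r$; here $\Sigma_{T_d}$ and $\Sigma'_{T_d}$ are positive definite because $C$ and $C'$ are positive definite covariance functions evaluated at distinct points. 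With $n=1$ and $K = r+1$, the partition in Step~3 of Algorithm~\ref{alg:premult-GP} necessarily consists of singletons, so there are exactly $K$ output processes. Restricted to $T_d$, Step~2 of Algorithm~\ref{alg:premult-GP} is precisely Step~2 of Algorithm~\ref{alg:premult} applied to the $K \times d$ augmented matrix whose first row is $X_{T_d}^\top$ and whose remaining rows hold the noise vectors: the vectors $\Xt{k}_{T_d} := (\Xt{k}(t_1),\dots,\Xt{k}(t_d))^\top$, $k = 1,\dots,K$, are exactly the rows of $QX^{\aug}$ in the notation of Theorem~\ref{thm:dependent}, with $p$ replaced by $d$, $\Sigma$ by $\Sigma_{T_d}$, and $\Sigma'$ by $\Sigma'_{T_d}$.

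Invoking Theorem~\ref{thm:dependent} in this finite-dimensional setting yields
$$
\begin{pmatrix} \Xt{1}_{T_d} \\ \vdots \\ \Xt{K}_{T_d} \end{pmatrix} \sim N_{Kd}\!\left( \text{vec}(\mu_{T_d} Q_X^\top),\ Q_X Q_X^\top \otimes \Sigma_{T_d} + (I_K - Q_X Q_X^\top)\otimes \Sigma'_{T_d} \right).
$$
The $k$th diagonal block of this covariance is $q_k^2 \Sigma_{T_d} + (1-q_k^2)\Sigma'_{T_d}$ (since the $(k,k)$ entry of $Q_X Q_X^\top$ is $q_k^2$), and the $k$th block of the mean is $q_k \mu_{T_d}$, so $\Xt{k}_{T_d} \sim N_d(q_k\mu_{T_d}, q_k^2\Sigma_{T_d} + (1-q_k^2)\Sigma'_{T_d})$. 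Each $\Xt{k}(t)$ is, by construction, a fixed finite linear combination of the independent Gaussian processes $X(t), W_1(t),\dots,W_r(t)$ and is therefore itself a Gaussian process; since $T_d$ was arbitrary, its mean function is $q_k\mu(t)$ and its covariance function is $q_k^2 C(t,t') + (1-q_k^2) C'(t,t')$, giving $\Xt{k}(t) \sim GP(q_k\mu, q_k^2 C + (1-q_k^2)C')$. (Alternatively the covariance function can be read off directly: $\mathrm{Cov}(\Xt{k}(t),\Xt{k}(t')) = q_k^2 C(t,t') + \big(\sum_{j=1}^r Q_{k(j+1)}^2\big) C'(t,t')$, and $\sum_{j=1}^r Q_{k(j+1)}^2 = 1 - q_k^2$ because row $k$ of $Q$ has unit norm with first entry $q_k$.)

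For the case $K = 2$, specialize the displayed distribution to $Q_X = (q_1,q_2)^\top$. Using $q_1^2 + q_2^2 = 1$ we get $I_2 - Q_X Q_X^\top = \begin{pmatrix} q_2^2 & -q_1 q_2 \\ -q_1 q_2 & q_1^2 \end{pmatrix}$, so the joint law of $(\Xt{1}_{T_d}, \Xt{2}_{T_d})$ has mean $(q_1\mu_{T_d}^\top, q_2\mu_{T_d}^\top)^\top$, diagonal blocks $q_k^2\Sigma_{T_d} + (1-q_k^2)\Sigma'_{T_d}$, and off-diagonal block $q_1 q_2(\Sigma_{T_d} - \Sigma'_{T_d})$ --- this is exactly Corollary~\ref{cor:collapse} (equivalently Remark~\ref{remark:conditional}) evaluated on $T_d$. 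Applying the standard Gaussian conditioning formula to obtain the law of $\Xt{2}_{T_d}$ given $\Xt{1}_{T_d}$ produces precisely \eqref{eq:GP-K2}. As $T_d$ ranges over all finite subsets of $T$, this determines the conditional distribution of $\Xt{2}$ given $\Xt{1}$ at every finite collection of indices, which is what the theorem asserts.

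The proof has no deep obstacle; the only point requiring care is the translation between the infinite-dimensional process-level statements and their finite-dimensional marginals --- namely, checking that the linear-combination processes $\Xt{k}(t)$ are bona fide Gaussian processes (immediate from the explicit construction) and that the positive-definiteness hypotheses of Theorem~\ref{thm:dependent} survive restriction to an arbitrary finite $T_d$ (immediate from $C$ and $C'$ being positive definite covariance functions). Everything else is bookkeeping with the Kronecker-product covariance structure inherited from Theorem~\ref{thm:dependent} and a routine Schur-complement computation for the conditional distribution.
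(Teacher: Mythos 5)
Your proof is correct and follows essentially the same route as the paper's: the paper likewise obtains the marginal law of each $\Xt{k}(t)$ by viewing it as a linear combination of the independent processes $X(t), W_1(t),\dots,W_r(t)$, and proves the conditional statement by restricting to an arbitrary finite index set $T_d$, writing down the joint $2d$-dimensional Gaussian with off-diagonal block $q_1q_2(\Sigma_{T_d}-\Sigma'_{T_d})$, and applying the standard Gaussian conditioning formula. The only cosmetic difference is that you obtain the joint finite-dimensional law by invoking Theorem~\ref{thm:dependent} directly, whereas the paper recomputes it in place; both are the same calculation.
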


Knowledge of the conditional distribution in \eqref{eq:GP-K2} enables us to employ the dependent Gaussian processes arising from Theorem~\ref{thm:dependent-GP} for model evaluation and inference tasks.

In Theorem~\ref{thm:dependent-GP}, while any covariance function $C'$ could be used, the discussion in Section \ref{subsec:fission-computation} suggests using a white noise process: that is, $C'(t,t')=\sigma'^2\delta_{t=t'}$ for all $t,t'\in T$, where $\delta_{t=t'}$ is an indicator variable that equals $1$ if $t=t'$ and $0$ otherwise, and $\sigma'$ a positive constant.

We can extend Examples~\ref{ex:validate} and \ref{ex:test} to the setting of Gaussian processes using the results in this section; details are omitted due to space constraints.

\section{Illustrative examples}
\label{sec:sims}

\subsection{Context}

In this section, we illustrate the use of Algorithm \ref{alg:premult} to solve Applications \ref{ap:fit} and \ref{ap:test}. We are motivated by the analysis of electroencephalograpy data, which presents as a matrix of readings, $X$, where each row of $X$ corresponds to an electrode and each column corresponds to a time point. \cite{zhou2014gemini} models $X$ as a single realization of a matrix-variate Gaussian distribution, $X \sim {N}_{a\times b}(\mathbf{\mu}, \Delta, \Gamma)$, where $a$ is the number of electrodes, $b$ is the number of time points, the row-covariance of $X$ is $\Delta \in \mathbb{S}^{a}$, and the column-covariance of $X$ is $\Gamma \in \mathbb{S}^{b}$; here, $\mathbb{S}^{d}$ denotes the cone of $d\times d$ positive semi-definite matrices. 

Our interest lies in the row-covariance $\Delta$, which encodes the relationship between the electrodes. Specifically, we wish to test whether certain elements of $\Delta$ selected using the data are equal to $0$, or to fit and validate a clustering model for the electrodes. 
These tasks are instances of Applications \ref{ap:test} and \ref{ap:fit}, respectively. Due to temporal dependence, one cannot simply treat the columns as independent Gaussian vectors, and thus sample splitting is not a viable solution. Rather, following Remark \ref{rem:stack}, we will stack the columns of $X$ into a single vector and will then apply Algorithm \ref{alg:premult} to $\text{vec}(X)\sim N_{ab}(\text{vec}(\mu),\Gamma\otimes\Delta)$ to decompose it into dependent folds.

For the remainder of this section, we will assume that $\mu=\mathbf{0}_{a,b}$, and take $\Gamma$ to be a first-order autoregressive covariance matrix with unknown parameter $\rho$, denoted by $\Gamma(\rho)$. 
The covariance of $\text{vec}(X)$ admits a fast eigendecomposition, and thus we will use the strategy outlined in Section \ref{subsec:fission-computation} for all likelihood evaluations; see Supplement \ref{app:fast} for details. Code to reproduce the analysis in this section is available at \url{https://github.com/AmeerD/Gaussians/}.

\subsection{Simulation study}
\label{subsec:ex-test}

We illustrate the use of Algorithm \ref{alg:premult} using simulated data with $a=25$ and $b=100$ in the context of Application \ref{ap:test}. As our selection procedure, we will identify the largest absolute off-diagonal entry in $\hat\Delta$, the sample-based estimate of $\Delta$, and will select the null hypothesis that the corresponding element of $\Delta$ is equal to $0$. As this hypothesis is data-driven, methods that reuse information used in selection for inference will be highly problematic. 

We will consider three methods to accomplish this task: method (a), a naive approach that uses $X$ to select the largest entry of the sample row-covariance matrix, and again uses $X$ to test if that entry is zero; method (b), an approach that selects the largest entry of the sample row-covariance of $\Xt{1}$ from Algorithm \ref{alg:premult}, and uses the marginal distribution of $\Xt{2}$ to test if that entry is zero (i.e. it ignores the dependence between folds when testing); and method (c), an approach that selects the largest entry of the sample row-covariance of $\Xt{1}$ from Algorithm \ref{alg:premult}, and uses the conditional distribution of $\Xt{2}|\Xt{1}$ to test if that entry is zero (i.e. it accounts for the dependence between folds when testing). Full details on the three methods are given in Supplement \ref{app:sims-test}. We note that in methods (b) and (c), the covariance structures of $X$ and $\Xt{1}$ will be identical so long as a diagonal $\Sigma'$ is used in Algorithm \ref{alg:premult}, so that $\Xt{1}$ can be easily used for selection. 

We consider two settings for the unknown parameters. In the first ``null" setting, we set $\rho=0.9$ and $\Delta=I_{25}$. As there is no true signal, p-values should be uniform. However, we will see that due to the recycling of information between selection and inference, methods (a) and (b) will fail, and only method (c) will successfully control the Type I error rate. In the second ``alternative" setting, $\rho=0.9$ again, but $\Delta=\text{diag}(\Delta_1,\Delta_2)$ is a block diagonal matrix where $\Delta_1=\omega\cdot1_21_2^\top+(1-\omega)\cdot I_2$ for $\omega\in\{-0.9,-0.8,\dots,0.8,0.9\}$ and $\Delta_2=I_{23}$. That is, the first and second coordinates (electrodes) of $X$ have non-zero covariance. Our objective in this setting is to study the power of method (c).

For the ``null" setting, we first generate one sample of $X\sim N_{25\times 100}(\mathbf{0}_{25\times100},I_{25},\Gamma(0.9))$ and then apply each of the three methods to select and test an entry of the row-covariance. For the ``alternative" setting, we generate one sample of $X\sim N_{25\times 100}(\mathbf{0}_{25\times100},\text{diag}(\Delta_1,\Delta_2),\Gamma(0.9))$ for each value of $\omega$ and then apply method (c). We repeat these processes 1000 times.

The results of this simulation study are given in Figure \ref{fig:inf}. In Panel \ref{fig:t1e}, we display the ``null" setting p-values under each method against uniform quantiles. As expected, only method (c) controls the Type I error rate when there is no true row-covariance in the data. For the alternative setting, our interest is in the power of method (c): that is, the probability of rejecting the null hypothesis that $\delta_{12}$, the covariance between the first two coordinates of $X$, is zero. As the selection step is not guaranteed to find $\delta_{12}$, following \cite{gao2020selective}, we first plot the ``detection probability", i.e. the proportion of replicates that select $\delta_{12}$, as a function of $\omega$ in Panel \ref{fig:detect} for three choices of $Q$. We then plot the conditional power as a function of $\omega$ in Panel \ref{fig:cpow}; this is the probability of rejecting the null hypothesis that $\delta_{12}=0$, given that $\delta_{12}$ was selected. Together, Panels \ref{fig:detect} and \ref{fig:cpow} show that as $|\omega|$ increases, both detection and power improve. Moreover, as $q_1$ increases, the increased allocation of Fisher information to $\Xt{1}$ leads to improved detection, though at the expense of power, as less information is left in $\Xt{2}|\Xt{1}$.

\begin{figure}
\centering
\begin{subfigure}{0.33\textwidth}
    \centering
    \includegraphics[width=\textwidth]{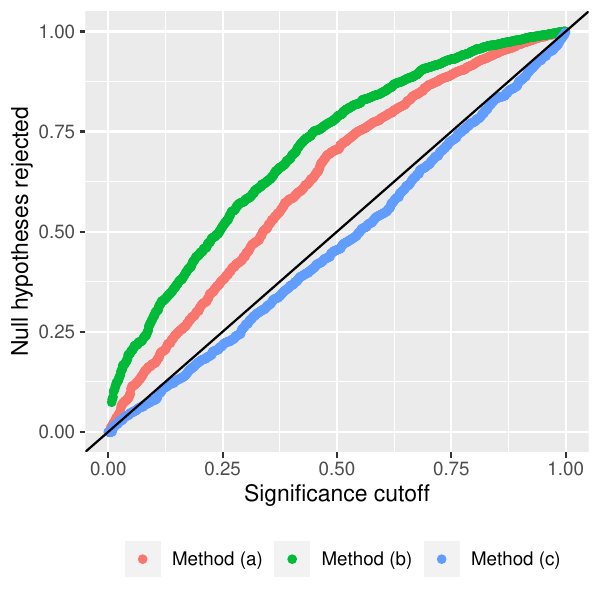}
    \caption{Type 1 error \if\bmka1{quantile-quantile }\else{QQ }\fi plot \label{fig:t1e}}
\end{subfigure}%
~
\begin{subfigure}{0.33\textwidth}
    \centering
    \includegraphics[width=\textwidth]{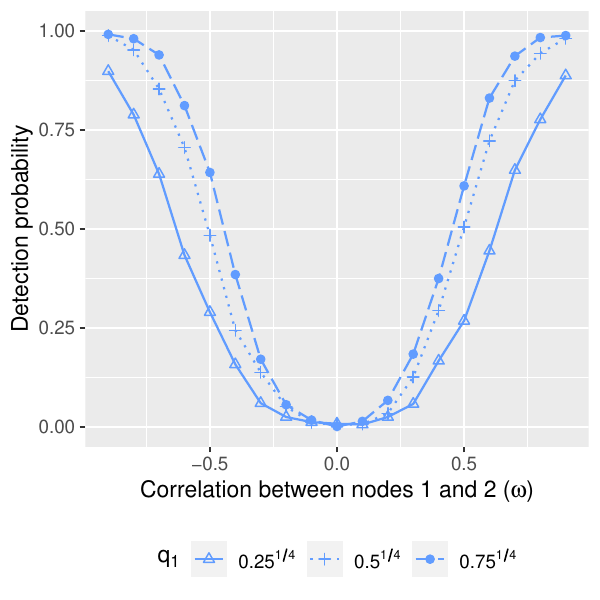}
    \caption{Detection curve (method (c)) \label{fig:detect}}
\end{subfigure}%
~
\begin{subfigure}{0.33\textwidth}
    \centering
    \includegraphics[width=\textwidth]{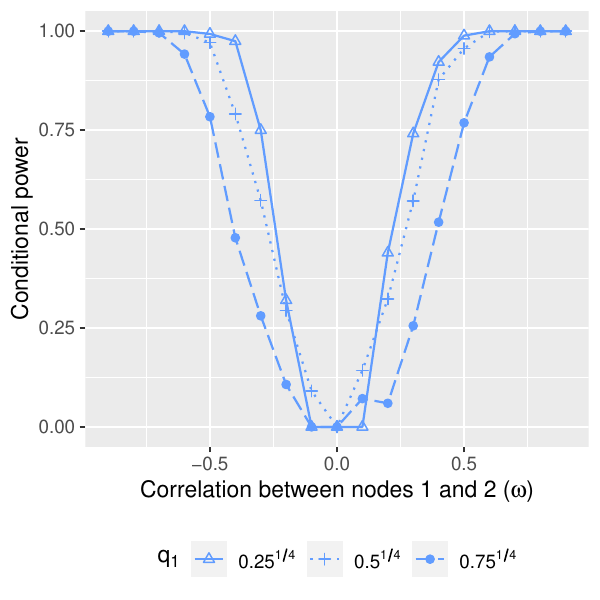}
    \caption{Conditional power (method (c)) \label{fig:cpow}}
\end{subfigure}
\caption{Results for the simulation described in Section \ref{subsec:ex-test}. The left plot demonstrates that only method (c), the decomposition strategy that accounts for dependency between folds, controls the Type 1 error rate. The second and third panels show that as the amount of information allocated to $\Xt{1}$ increases, the detection probability of method (c) increases, and its conditional power decreases. Each curve represents a distinct value of $q_1$, the top-left element of $Q$ (since $Q$ is $2\times2$, the remaining entries follow up to signs; see Supplement \ref{app:sims-test} for the exact values of $Q$).
\label{fig:inf}}
\end{figure}

\subsection{Data analysis}
\label{subsec:ex-eeg}

The electroencephalography dataset from the UCI machine learning repository \citep{misc_eeg_database_121} was originally collected to examine electroencephalography correlates of the genetics of alcoholism. The data collection details are described in detail by \cite{zhang1997electrophysiological}. The full dataset included 122 subjects who belong to either the alcoholic group or the control group. The data for a single subject from a single trial is a matrix with $64$ rows (one per electrode) and $256$ columns (the brain is recorded for one second at 256 Hz). After standardizing the rows to have mean $0$ and variance $1$, we model this data as $X \sim N_{64\times256}(\mathbf{0}_{64\times256}, \Delta, \Gamma(\rho))$ where $\Delta$ is a correlation matrix and $\Gamma(\rho)$ is a first-order autoregressive covariance matrix. Our goal is to identify and evaluate clusters of connectivity in one subject's brain, defined as blocks of non-zero entries in $\Delta$, using a single trial. Because $n=1$, sample splitting is not an option. 

Following the outline in Example \ref{ex:validate}, we study $\Delta$ with the following three-step process:
\begin{enumerate}
    \item Apply Algorithm \ref{alg:premult} with $r=1$, $K=2$, $Q=\begin{pmatrix} 0.5^{1/4} & \sqrt{1-0.5^{1/2}} \\ \sqrt{1-0.5^{1/2}} & -0.5^{1/4} \end{pmatrix}$, and $\Sigma'=I_{64\times256}$ to decompose $\text{vec}(X)$ into $\Xt{1}$ and $\Xt{2}$. 
    \item Estimate $\hat\Delta$ and $\hat\rho$ from $\Xt{1}$; details are given in Algorithm \ref{alg:eeg} in Supplement \ref{app:sims-eeg}. Then, perform hierarchical clustering on the electrodes, where distance between any pair of electrodes is defined as one minus the value of the corresponding entry of $\hat\Delta$.  Let $\mathcal{C}(h)$ be the clustering from step $h$, where $h=1,\dots,64$.
    \item Identify the optimal clustering $\mathcal{C}(\hat h)$ as $\hat h=\arg\max_h \mathcal{L}(\hat\Delta(h),\hat\rho;\Xt{2}|\Xt{1})$ where $\hat\Delta(h)$ sets to zero all entries of $\hat\Delta$ that correspond to electrodes \emph{not} in the same cluster in $\mathcal{C}(h)$. 
\end{enumerate}

Following this procedure for control subject co2c0000337, 
we identify five clusters of electrodes; see Figure \ref{fig:eeg}. Panel \ref{fig:cllX2X1} displays the conditional log-likelihood $\mathcal{L}(\hat\Delta(h),\hat\rho;\Xt{2}|\Xt{1})$ as a function of the number of clusters $h$; it is maximized at $\hat h=5$. 
Panel \ref{fig:brain} displays the electrode labels in accordance with their placement on the subject's head, and colours the labels by cluster. The three large clusters represent three distinct regions of the brain: the left, front-right, and back. The two smaller clusters suggest a pocket in the centre-right with limited external communication. 

\begin{figure}
\centering
\begin{subfigure}{0.35\textwidth}
    \centering
    \includegraphics[width=\textwidth]{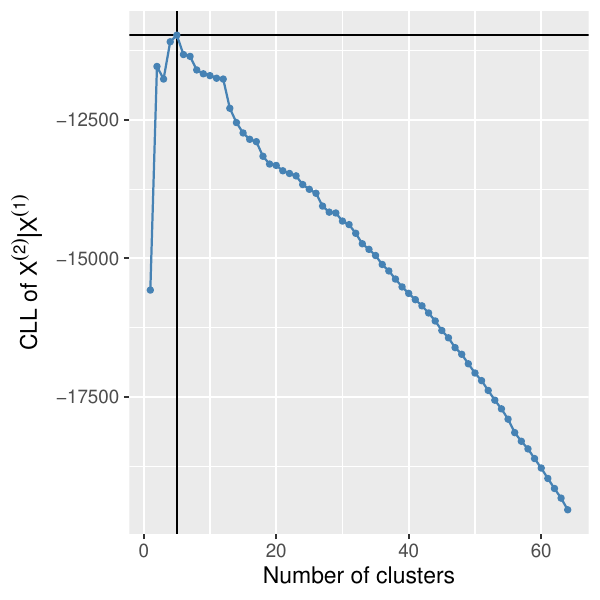}
    \caption{Validation fold CLL \label{fig:cllX2X1}}
\end{subfigure}%
~
\begin{subfigure}{0.35\textwidth}
    \centering
    \includegraphics[width=\textwidth]{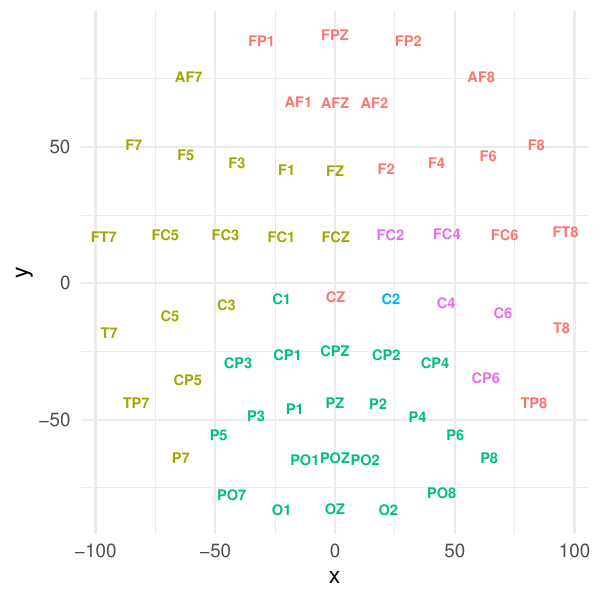}
    \caption{Spatial organization of clusters \label{fig:brain}}
\end{subfigure}%
\caption{Results for the data analysis in Section \ref{subsec:ex-eeg}. Panel (a): The conditional log-likelihood (CLL) curve of $\hat\Delta$ and $\hat\rho$ with respect to  $\Xt{2}|\Xt{1}$ as a function of the number of clusters. The vertical and horizontal lines indicate the optimal CLL value and the corresponding number of clusters. Panel (b): Map of electroencephalography electrode locations (electrodes ``X", ``Y", and ``nd" are excluded as they refer to the two electrooculography and reference electrodes, respectively). Positive x- and y-values refer to the right and front of the subject's head, respectively.  
\label{fig:eeg}}
\end{figure}

\section{Discussion}

Randomization strategies are an increasingly useful alternative to sample splitting in settings where the latter is either inadequate or inapplicable. These strategies fall into two broad classes: those that lead to independent folds and those that lead to dependent folds. When available, independent decompositions are preferable, as they typically can directly replace sample splitting in data analysis pipelines with only minor modifications. By contrast, dependent decompositions trade ease of application in exchange for an astounding degree of flexibility. In the case of multivariate Gaussians, we have shown that \emph{only} a dependent randomization strategy is available in the $n=1$ setting. The price of dependence is a complicated conditional likelihood with a different covariance from the original data. 

While in this paper we focused on multivariate Gaussians, one could apply Algorithm \ref{alg:premult} to any data: once we relax the goal of independence, it only remains to identify the distributions of $\Xt{1}$ and $\Xt{2}|\Xt{1}$. As a concrete example, if $X$ is multivariate Cauchy, then $\Xt{1}$ will follow a multivariate version of the Voigt distribution, and while $\Xt{2}|\Xt{1}$ will be some complex unnamed conditional distribution, it will be tractable in the sense that we can evaluate its likelihood. Identifying the set of distributions for which Algorithm \ref{alg:premult} may prove useful is a potentially interesting research question. Finally, we comment that following the results of Section \ref{sec:GP}, we expect that randomization strategies for other stochastic processes may be within reach and are worth exploring. 


\section*{Acknowledgement}
We thank Jordan Bryan for helpful suggestions that contributed to a major re-framing of an earlier version of this work. We thank Daniel Kessler for helpful conversations about the electroencephalograpy data analysis.
We acknowledge funding from the following sources: 
Office of Naval Research of the United States, Simons Foundation,  Keck Foundation,  National Science Foundation, and  National Institutes of Health of the United States to DW; National Institutes of Health of the United States to DW and JB; and Natural Sciences and Engineering Research Council of Canada to LG and AD.

\section*{Supplementary material}
\label{SM}
The Supplementary Material includes details on decomposing into dependent folds when $n>1$, proofs of technical results, hyperparameter selection, and additional details on the simulation and data analyses.

\bibliographystyle{agsm}
\bibliography{gaussians}

\begin{thebibliography}{30}
\expandafter\ifx\csname natexlab\endcsname\relax\def\natexlab#1{#1}\fi

\bibitem[{Begleiter(1999)}]{misc_eeg_database_121}
\textsc{Begleiter, H.} (1999).
\newblock {EEG Database}.
\newblock UCI Machine Learning Repository.
\newblock {DOI}: https://doi.org/10.24432/C5TS3D.

\bibitem[{B{\"u}rkner et~al.(2020)B{\"u}rkner, Gabry \& Vehtari}]{burkner2020approximate}
\textsc{B{\"u}rkner, P.-C.}, \textsc{Gabry, J.} \& \textsc{Vehtari, A.} (2020).
\newblock Approximate leave-future-out cross-validation for {Bayesian} time series models.
\newblock \textit{Journal of Statistical Computation and Simulation} \textbf{90}, 2499--2523.

\bibitem[{Carpenter et~al.(2017)Carpenter, Gelman, Hoffman, Lee, Goodrich, Betancourt, Brubaker, Guo, Li \& Riddell}]{carpenter2017stan}
\textsc{Carpenter, B.}, \textsc{Gelman, A.}, \textsc{Hoffman, M.~D.}, \textsc{Lee, D.}, \textsc{Goodrich, B.}, \textsc{Betancourt, M.}, \textsc{Brubaker, M.~A.}, \textsc{Guo, J.}, \textsc{Li, P.} \& \textsc{Riddell, A.} (2017).
\newblock Stan: A probabilistic programming language.
\newblock \textit{Journal of Statistical Software} \textbf{76}.

\bibitem[{Cox(1975)}]{cox1975note}
\textsc{Cox, D.~R.} (1975).
\newblock A note on data-splitting for the evaluation of significance levels.
\newblock \textit{Biometrika} \textbf{62}, 441--444.

\bibitem[{Dharamshi et~al.(2024)Dharamshi, Neufeld, Motwani, Gao, Witten \& Bien}]{dharamshi2023generalized}
\textsc{Dharamshi, A.}, \textsc{Neufeld, A.}, \textsc{Motwani, K.}, \textsc{Gao, L.~L.}, \textsc{Witten, D.} \& \textsc{Bien, J.} (2024).
\newblock Generalized data thinning using sufficient statistics.
\newblock \textit{Journal of the American Statistical Association} , 1--13.

\bibitem[{Durbin \& Koopman(2012)}]{durbin2012time}
\textsc{Durbin, J.} \& \textsc{Koopman, S.~J.} (2012).
\newblock \textit{Time series analysis by state space methods}, vol.~38.
\newblock OUP Oxford.

\bibitem[{Fithian et~al.(2014)Fithian, Sun \& Taylor}]{fithian2014optimal}
\textsc{Fithian, W.}, \textsc{Sun, D.} \& \textsc{Taylor, J.} (2014).
\newblock Optimal inference after model selection.
\newblock \textit{arXiv preprint arXiv:1410.2597} .

\bibitem[{Gabry et~al.(2023)Gabry, Češnovar \& Johnson}]{cmdstanr}
\textsc{Gabry, J.}, \textsc{Češnovar, R.} \& \textsc{Johnson, A.} (2023).
\newblock \textit{cmdstanr: R Interface to 'CmdStan'}.
\newblock Https://mc-stan.org/cmdstanr/.

\bibitem[{Gao et~al.(2024)Gao, Bien \& Witten}]{gao2020selective}
\textsc{Gao, L.~L.}, \textsc{Bien, J.} \& \textsc{Witten, D.} (2024).
\newblock Selective inference for hierarchical clustering.
\newblock \textit{Journal of the American Statistical Association} \textbf{119}, 332--342.

\bibitem[{Leiner et~al.(2023)Leiner, Duan, Wasserman \& Ramdas}]{leiner2022data}
\textsc{Leiner, J.}, \textsc{Duan, B.}, \textsc{Wasserman, L.} \& \textsc{Ramdas, A.} (2023).
\newblock Data fission: splitting a single data point.
\newblock \textit{Journal of the American Statistical Association} , 1--12.

\bibitem[{Lévy(1948)}]{lévy_1948}
\textsc{Lévy, P.} (1948).
\newblock The arithmetical character of the {Wishart} distribution.
\newblock \textit{Mathematical Proceedings of the Cambridge Philosophical Society} \textbf{44}, 295–297.

\bibitem[{Mardia \& Marshall(1984)}]{mardia1984maximum}
\textsc{Mardia, K.~V.} \& \textsc{Marshall, R.~J.} (1984).
\newblock Maximum likelihood estimation of models for residual covariance in spatial regression.
\newblock \textit{Biometrika} \textbf{71}, 135--146.

\bibitem[{Neufeld et~al.(2024{\natexlab{a}})Neufeld, Dharamshi, Gao \& Witten}]{neufeld2023data}
\textsc{Neufeld, A.}, \textsc{Dharamshi, A.}, \textsc{Gao, L.~L.} \& \textsc{Witten, D.} (2024{\natexlab{a}}).
\newblock Data thinning for convolution-closed distributions.
\newblock \textit{Journal of Machine Learning Research} \textbf{25}, 1--35.

\bibitem[{Neufeld et~al.(2024{\natexlab{b}})Neufeld, Dharamshi, Gao, Witten \& Bien}]{neufeld2024discussiondatafissionsplitting}
\textsc{Neufeld, A.}, \textsc{Dharamshi, A.}, \textsc{Gao, L.~L.}, \textsc{Witten, D.} \& \textsc{Bien, J.} (2024{\natexlab{b}}).
\newblock Discussion of ``{Data} fission: splitting a single data point".
\newblock \textit{arXiv preprint arXiv:2409.03069} .

\bibitem[{Noschese et~al.(2013)Noschese, Pasquini \& Reichel}]{tridiag}
\textsc{Noschese, S.}, \textsc{Pasquini, L.} \& \textsc{Reichel, L.} (2013).
\newblock Tridiagonal {Toeplitz} matrices: properties and novel applications.
\newblock \textit{Numerical Linear Algebra with Applications} \textbf{20}, 302--326.

\bibitem[{Oliveira et~al.(2021)Oliveira, Lei \& Tibshirani}]{oliveira2021unbiased}
\textsc{Oliveira, N.~L.}, \textsc{Lei, J.} \& \textsc{Tibshirani, R.~J.} (2021).
\newblock Unbiased risk estimation in the normal means problem via coupled bootstrap techniques.
\newblock \textit{arXiv preprint arXiv:2111.09447} .

\bibitem[{Peddada \& Richards(1991)}]{Peddada1991}
\textsc{Peddada, S.~D.} \& \textsc{Richards, D. S.~P.} (1991).
\newblock Proof of a conjecture of {M. L. Eaton} on the characteristic function of the {Wishart} distribution.
\newblock \textit{The Annals of Probability} \textbf{19}, 868 -- 874.

\bibitem[{Petris(2010)}]{dlm}
\textsc{Petris, G.} (2010).
\newblock An {R} package for dynamic linear models.
\newblock \textit{Journal of Statistical Software} \textbf{36}, 1--16.

\bibitem[{Rasines \& Young(2023)}]{rasines2021splitting}
\textsc{Rasines, D.~G.} \& \textsc{Young, G.~A.} (2023).
\newblock Splitting strategies for post-selection inference.
\newblock \textit{Biometrika} \textbf{110}, 597--614.

\bibitem[{Rue \& Held(2005)}]{rueheld}
\textsc{Rue, H.} \& \textsc{Held, L.} (2005).
\newblock \textit{Gaussian Markov Random Fields: Theory and Applications}.
\newblock Chapman and Hall/CRC.

\bibitem[{Rue et~al.(2009)Rue, Martino \& Chopin}]{rue2009approximate}
\textsc{Rue, H.}, \textsc{Martino, S.} \& \textsc{Chopin, N.} (2009).
\newblock Approximate {Bayesian} inference for latent {Gaussian} models by using integrated nested {Laplace} approximations.
\newblock \textit{Journal of the Royal Statistical Society Series B: Statistical Methodology} \textbf{71}, 319--392.

\bibitem[{Schabenberger \& Gotway(2017)}]{schabenberger2017statistical}
\textsc{Schabenberger, O.} \& \textsc{Gotway, C.~A.} (2017).
\newblock \textit{Statistical methods for spatial data analysis}.
\newblock Chapman and Hall/CRC.

\bibitem[{Shanbhag(1976)}]{SHANBHAG1976347}
\textsc{Shanbhag, D.} (1976).
\newblock On the structure of the {Wishart} distribution.
\newblock \textit{Journal of Multivariate Analysis} \textbf{6}, 347--355.

\bibitem[{Srivastava(2003)}]{srivastava2003singular}
\textsc{Srivastava, M.~S.} (2003).
\newblock Singular {Wishart} and multivariate beta distributions.
\newblock \textit{The Annals of Statistics} \textbf{31}, 1537--1560.

\bibitem[{{Stan Development Team}(2024)}]{stan}
\textsc{{Stan Development Team}} (2024).
\newblock \textit{Stan Modeling Language Users Guide and Reference Manual}.
\newblock Https://mc-stan.org.

\bibitem[{Tian(2020)}]{tian2020prediction}
\textsc{Tian, X.} (2020).
\newblock {Prediction error after model search}.
\newblock \textit{The Annals of Statistics} \textbf{48}, 763 -- 784.

\bibitem[{Tian \& Taylor(2018)}]{tian2018selective}
\textsc{Tian, X.} \& \textsc{Taylor, J.} (2018).
\newblock Selective inference with a randomized response.
\newblock \textit{The Annals of Statistics} \textbf{46}, 679--710.

\bibitem[{Williams \& Rasmussen(2006)}]{williams2006gaussian}
\textsc{Williams, C.~K.} \& \textsc{Rasmussen, C.~E.} (2006).
\newblock \textit{Gaussian processes for machine learning}, vol.~2.
\newblock MIT press Cambridge, MA.

\bibitem[{Zhang et~al.(1997)Zhang, Begleiter, Porjesz \& Litke}]{zhang1997electrophysiological}
\textsc{Zhang, X.~L.}, \textsc{Begleiter, H.}, \textsc{Porjesz, B.} \& \textsc{Litke, A.} (1997).
\newblock Electrophysiological evidence of memory impairment in alcoholic patients.
\newblock \textit{Biological Psychiatry} \textbf{42}, 1157--1171.

\bibitem[{Zhou(2014)}]{zhou2014gemini}
\textsc{Zhou, S.} (2014).
\newblock {Gemini: Graph estimation with matrix variate normal instances}.
\newblock \textit{The Annals of Statistics} \textbf{42}, 532 -- 562.

\end{thebibliography}

\newpage
\setcounter{figure}{0}
\setcounter{lemma}{0}
\setcounter{theorem}{0}
\setcounter{algo}{0}
\setcounter{proposition}{0}
\setcounter{remark}{0}
\setcounter{example}{0}
\setcounter{corollary}{0}
\setcounter{definition}{0}
\makeatletter
\renewcommand \thefigure{S\@arabic\c@figure}
\renewcommand \thelemma{S\@arabic\c@lemma}
\renewcommand \thetheorem{S\@arabic\c@theorem}
\renewcommand \thealgo{S\@arabic\c@algo}
\renewcommand \theproposition{S\@arabic\c@proposition}
\renewcommand \theremark{S\@arabic\c@remark}
\renewcommand \theexample{S\@arabic\c@example}
\renewcommand \thecorollary{S\@arabic\c@corollary}
\renewcommand \thedefinition{S\@arabic\c@definition}
\makeatother
\appendix
\section*{Supplementary Materials}
\section{Theorem~\ref{thm:dependent} when $n>1$} 
\label{app:thm:dependent:nbig}

When applying Algorithm \ref{alg:premult}, we would like the structure of $\Xt{1},\dots,\Xt{K}$ to resemble that of $X$, so  that models designed for $X$ can be easily extended to $\Xt{1},\dots,\Xt{K}$. When $n>1$, the rows of $X$ are independent and identically distributed, but as seen in Theorem \ref{thm:dependent}, the rows of $X'$ resulting from Algorithm~\ref{alg:premult} are not. The following corollary of Theorem~\ref{thm:dependent} shows that if one chooses $r$ to be a multiple of $n$ and $Q$ to be a block diagonal matrix, then one can construct $\Xt{1},\dots,\Xt{K}$ so that the rows \emph{within} each fold are independent, though there is dependence \emph{between} folds. 

\begin{corollary}
\label{cor:block}
Consider the setting of Theorem \ref{thm:dependent} where $n>1$, $r=r'n$ where $r'\in \mathbb{Z}_+$, $K=r'+1$, $Q=I_n \otimes Q'$ where $Q'$ is an orthogonal matrix of dimension $K\times K$, and $X^\aug$ is constructed as in Remark \ref{remark:Xaug}. Let $Q_X$ denote the first column of $Q$. Then, marginally,
$$
\text{vec}(X'^\top) \sim N_{nKp}\left(\text{vec}(\mu1_n^\top\otimes Q_X^\top),I_n\otimes (Q_XQ_X^\top\otimes\Sigma+(I_K - Q_XQ_X^\top)\otimes\Sigma')\right).
$$
For $k=1,\dots,K$, let $\Xt{k}$ be the $n \times p$ submatrix of $X'$ consisting of the $(k+Ki)$th rows where $i\in\{0,\dots,n-1\}$. Then,
$$
\Xt{k}\sim N_{n\times p}(q_k'1_n\mu^\top,I_n,q_k'^2\Sigma+(1-q_k'^2)\Sigma'),
$$
where $q_k'$ is the $k$th entry of the first column of $Q'$. In other words, the $n$ rows of $\Xt{k}$ are independent and identically distributed multivariate Gaussians with mean $q_k'\mu$ and covariance $q_k'^2\Sigma+(1-q_k'^2)\Sigma'$.
\end{corollary}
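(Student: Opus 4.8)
The plan is to reduce the statement to $n$ independent copies of Theorem~\ref{thm:dependent} in the $n=1$ case, by exploiting the block structure created jointly by the interspersed augmentation of Remark~\ref{remark:Xaug} and the Kronecker form $Q = I_n \otimes Q'$. The essential point is that neither Step~2 of Algorithm~\ref{alg:premult} nor the augmentation ever mixes the data or noise associated with different rows of $X$.

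First I would unpack the augmentation. With $r = r'n$ and $X^\aug$ built as in Remark~\ref{remark:Xaug}, the $nK$ rows of $X^\aug$ split into $n$ consecutive blocks of size $K = r'+1$; the $i$th block $X^{\aug,(i)}$ has $X_i \sim N_p(\mu,\Sigma)$ in its first row, followed by $r' = K-1$ independent $N_p(0,\Sigma')$ noise rows. Since the rows of $X$ are mutually independent and the noise is independent of $X$ and across rows, the blocks $X^{\aug,(1)},\dots,X^{\aug,(n)}$ are mutually independent. Next I would use that $Q = I_n \otimes Q'$ is block diagonal with every diagonal block equal to $Q'$, so $X' = QX^\aug$ acts block-wise: writing $X'^{(i)}$ for the $i$th block of $K$ consecutive rows of $X'$, one has $X'^{(i)} = Q' X^{\aug,(i)}$. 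Hence the $X'^{(i)}$ are independent, and each $X'^{(i)}$ is precisely the output of Algorithm~\ref{alg:premult} applied to the single vector $X_i$ with $n=1$, $r = r'$, $K = r'+1$, noise covariance $\Sigma'$, and orthogonal matrix $Q'$ (with $X_i$ in the first row of the augmented data, as required).

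Then I would invoke Theorem~\ref{thm:dependent} on each block to obtain $\text{vec}(X'^{(i)\top}) \sim N_{Kp}(\text{vec}(\mu q^\top),\, q q^\top \otimes \Sigma + (I_K - q q^\top)\otimes\Sigma')$, where $q$ is the first column of $Q'$ (the vector written $Q_X$ in the statement). Because $\text{vec}(X'^\top)$ simply concatenates $\text{vec}(X'^{(1)\top}),\dots,\text{vec}(X'^{(n)\top})$ in order, and these are independent with identical law, the joint distribution is the $n$-fold product, i.e.\ $N_{nKp}$ with covariance $I_n \otimes (q q^\top \otimes \Sigma + (I_K - q q^\top)\otimes\Sigma')$ and mean $1_n \otimes \text{vec}(\mu q^\top) = \text{vec}(\mu 1_n^\top \otimes q^\top)$, the last equality being a routine vectorization identity; this is the first display. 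For the second display, note that row $k + Ki$ of $X'$ is the $k$th row of block $X'^{(i+1)}$; reading off the $k$th $p\times p$ diagonal block of the per-block covariance above shows this row is marginally $N_p(q_k'\mu,\, q_k'^2\Sigma + (1-q_k'^2)\Sigma')$ with $q_k'$ the $k$th entry of $q$, and the $n$ such rows (one per block) are independent because the blocks are. Assembling them gives $\Xt{k}\sim N_{n\times p}(q_k' 1_n\mu^\top, I_n, q_k'^2\Sigma + (1-q_k'^2)\Sigma')$.

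The computation is essentially bookkeeping once the block reduction is in place; the only points that require genuine care are (i) checking that the interspersed construction of $X^\aug$ really does place each $X_i$ as the first row of the $i$th size-$K$ block, so that every block satisfies the hypotheses of Theorem~\ref{thm:dependent}, and (ii) tracking the ordering conventions of $\text{vec}(\cdot)$ and the Kronecker products so that independence across blocks translates into the advertised $I_n \otimes (\cdot)$ factor and the mean collapses to $\text{vec}(\mu 1_n^\top \otimes q^\top)$. I expect (ii) to be the main, though modest, obstacle.
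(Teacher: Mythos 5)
Your proof is correct and follows essentially the same route as the paper, which justifies this corollary precisely by the observation that the interspersed augmentation together with the block-diagonal $Q=I_n\otimes Q'$ amounts to applying Theorem~\ref{thm:dependent} independently to each row of $X$ and then regrouping the $k$th outputs. You also correctly resolve the slight notational slip in the statement ($Q_X$ must be read as the first column of $Q'$, of length $K$), and your bookkeeping of the vectorization and Kronecker identities checks out.
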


The procedure described in Corollary \ref{cor:block} can be viewed as applying Theorem \ref{thm:dependent} to each row of $X$ separately, and then taking the $k$th row from each of the $n$ output matrices of Algorithm \ref{alg:premult}, and grouping them together for form $\Xt{k}$, for $k=1,\dots,K$. Thus, each $\Xt{k}$ matrix contains exactly one row originating from each of the rows of $X$, thereby replicating the structure of $X$.

\section{Proofs}\label{sec:proofs}

\subsection{Proof of Proposition \ref{prop:thin}}

\begin{proof}
Since the auxiliary noise vectors have the same covariance as $X$, the distribution of $X^\aug$ can be expressed concisely as
$$
X^\aug \sim N_{K\times p}\left(\begin{pmatrix} \mu^\top \\ 0_{(K-1) \times p} \end{pmatrix} ,I_K,\Sigma\right).
$$

Next, let $\epsilon=(\sqrt{\epsilon_1},\dots,\sqrt{\epsilon_K})^\top$, which implies that $Q$ can be written as $Q=\begin{pmatrix}\epsilon, U^\perp_\epsilon\end{pmatrix}$. Observe that $\|\epsilon\|^2=\sum_{k=1}^K\epsilon_k=1$, $Q$ is an orthogonal matrix and  
$$
X'=QX^\aug\sim N_{K\times p}\left(\begin{pmatrix}\epsilon & U^\perp_\epsilon\end{pmatrix}\begin{pmatrix} \mu^\top \\ 0_{(K-1) \times p} \end{pmatrix},I_K,\Sigma\right) = N_{K\times p}\left(\epsilon\mu^\top ,I_K,\Sigma\right).
$$

From the distribution of $X'$, we can recover the results in the proposition. 
Since the row-covariance is $I_K$, it immediately follows that the rows of $X'$ are independent. The distribution of the $k$th row is then given by $\Xt{k}\sim N_p(\sqrt{\epsilon_k}\mu,\Sigma)$ as required.

\end{proof}

\subsection{Proof of Proposition \ref{prop:fission}}

\begin{proof}
Write $W\sim N_p(0,\Sigma')$ to denote the single auxiliary noise vector. Then,
$$
X'=QX^\aug=\begin{pmatrix} \frac{1}{\sqrt{2}} & \frac{1}{\sqrt{2}} \\ \frac{1}{\sqrt{2}} & -\frac{1}{\sqrt{2}} \end{pmatrix}\begin{pmatrix}X^\top \\ W^\top\end{pmatrix} = \begin{pmatrix} \frac{1}{\sqrt{2}}(X^\top + W^\top) \\ \frac{1}{\sqrt{2}} (X^\top-W^\top)\end{pmatrix}.
$$
Since $X'$ is jointly normal, it remains to find the mean and variance for each row, and the covariance between the rows. We have that 
\begin{align*}
    \E\left[\frac{1}{\sqrt{2}}(X^\top + W^\top)\right] &= \frac{1}{\sqrt{2}}\mu, \\
    \E\left[\frac{1}{\sqrt{2}}(X^\top - W^\top)\right] &= \frac{1}{\sqrt{2}}\mu, \\
    \Var\left(\frac{1}{\sqrt{2}}(X^\top + W^\top)\right) &= \frac{1}{2}(\Sigma + \Sigma'), \\
    \Var\left(\frac{1}{\sqrt{2}}(X^\top - W^\top)\right) &= \frac{1}{2}(\Sigma + \Sigma'), \\
    \Cov\left(\frac{1}{\sqrt{2}}(X^\top + W^\top),\frac{1}{\sqrt{2}}(X^\top - W^\top)\right) &= \frac{1}{2}(\Sigma - \Sigma').
\end{align*}
Thus,
$$
\begin{pmatrix} \Xt{1} \\ \Xt{2} \end{pmatrix} \sim N_{2p}\left(\frac{1}{\sqrt{2}}\begin{pmatrix} \mu \\ \mu \end{pmatrix}, \frac{1}{2}\begin{pmatrix} \Sigma + \Sigma' & \Sigma - \Sigma' \\ \Sigma - \Sigma' & \Sigma + \Sigma' \end{pmatrix} \right)
$$
as required.
\end{proof}

\subsection{Proof of Proposition \ref{prop:premult}}

\begin{proof}
Since $r=0$, $X^{aug}=X$. The first conclusion follows immediately from the basic properties of the matrix-variate Gaussian and the fact that $Q$ is orthogonal. Specifically, $X'=QX\sim N_{n\times p}(Q1_n\mu^\top,QQ^\top,\Sigma)$ and $QQ^\top=I_n$. For the second conclusion, the constraint that $Q1_n=1_n$ then implies that the distribution of $X'$ reduces to $X'\sim N_{n\times p}(1_n\mu^\top,I_n,\Sigma)$ as required.
\end{proof}

\subsection{Proof of Proposition \ref{prop:premult-fisher}}

\begin{proof}
Since the rows of $X'$ are independent, it suffices to compute the Fisher information allocated to each row of $X'$, and then add the information in the appropriate $n_k$ rows to recover the Fisher information allocated to $\Xt{k}$. 

Let $I_{X_1}(\mu)$ and $I_{X_1}(\Sigma)$ denote the amount of Fisher information contained in a single row of $X$ about $\mu$ and $\Sigma$, respectively. Since the rows of $X$ are independent and identically distributed, the amount of Fisher information about $\mu$ and $\Sigma$ in all of $X$ is then $I_X(\mu)=nI_{X_1}(\mu)$ and $I_X(\Sigma)=nI_{X_1}(\Sigma)$ respectively.

For $j=1,\dots,n$, let $Q_j$ and $X'_j$ denote the $j$th rows of $Q$ and $X'$, respectively. Note that $X'_j\sim N_p(\mu 1_n^\top Q_j,\Sigma)$. It then follows from standard Fisher information manipulations that $I_{X'_j}(\mu)=(1_n^\top Q_j Q_j^\top1_n)I_{X_1}(\mu)$ and $I_{X'_j}(\Sigma)=I_{X_1}(\Sigma)$. Since the rows of $X'$ are independent, aggregating the appropriate $X'_j$ into each $\Xt{k}$ implies that $I_{\Xt{k}}(\mu)=[(1_n^\top Q^{(k)\top} \Qt{k}1_n)/n] I_{X}(\mu)$ and $I_{\Xt{k}}(\Sigma)=[n_k/n] I_{X}(\Sigma)$ as required.

\end{proof}

\subsection{Proof of Theorem \ref{thm:maxn}}

Theorem \ref{thm:maxn} is closely connected to the concept of data thinning introduced in \cite{neufeld2023data} and \cite{dharamshi2023generalized}: its conclusion is equivalent to the claim that $X$ cannot be thinned. To see this, we restate their definition of data thinning. 

\begin{definition}[Data thinning \citep{dharamshi2023generalized}] 
\label{def:thinning}
Consider a family of distributions $\mathcal P=\{P_\theta:\theta\in\Omega\}$. For $Y \sim P_\theta$, suppose that we can sample $(\Yt{1},\ldots,\Yt{K})|Y$, without knowledge of $\theta$, to achieve the following properties:
\begin{enumerate}
  \item $\Yt{1},\ldots,\Yt{K}$ are mutually independent (with distributions depending on $\theta$),  and
    \item $Y=T(\Yt{1},\ldots,\Yt{K})$ for some deterministic function $T(\cdot)$.
\end{enumerate}
Then, we refer to the process of sampling  $(\Yt{1},\ldots,\Yt{K})|Y$ as \emph{thinning} by the function $T(\cdot)$. 
\end{definition}

We begin the proof with a technical lemma about data thinning.

\begin{lemma}
\label{lem:thin}
Suppose that $X\sim P_\theta$, where $P_\theta$ is a distribution parameterized by an unknown $\theta$, and that $S(X)$ is a sufficient statistic for $\theta$. If $S(X)$ cannot be thinned, then neither can $X$.
\end{lemma}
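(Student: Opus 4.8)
\textbf{Proof plan for Lemma~\ref{lem:thin}.}

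The plan is to argue by contradiction: suppose $X$ \emph{can} be thinned, and use the sufficiency of $S(X)$ together with the Fisher--Neyman factorization (or, more precisely, the conditional-independence characterization of sufficiency) to manufacture a thinning of $S(X)$, contradicting the hypothesis that $S(X)$ cannot be thinned. The key structural fact I would lean on is that because $S = S(X)$ is sufficient for $\theta$, the conditional distribution of $X$ given $S(X) = s$ does not depend on $\theta$, so it can be sampled from without knowledge of $\theta$.

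First, I would unpack what it means for $X$ to be thinnable via Definition~\ref{def:thinning}: there is a randomization kernel producing $(\Xt{1},\ldots,\Xt{K}) \mid X$, not depending on $\theta$, such that the $\Xt{k}$ are mutually independent and $X = T(\Xt{1},\ldots,\Xt{K})$ for some deterministic $T$. Next, I would define a candidate thinning of $S := S(X)$ as follows: given $S = s$, first draw $X$ from the ($\theta$-free) conditional distribution $X \mid S = s$, then apply the assumed $X$-thinning kernel to get $(\Xt{1},\ldots,\Xt{K})$, and finally set $\St{k} := S(\Xt{k})$ for each $k$ (or, more robustly, $\St{k}$ equal to the full output $\Xt{k}$ together with bookkeeping — but taking $\St{k} = S(\Xt{k})$ is the natural choice and is what must be shown to work). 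This whole procedure uses no knowledge of $\theta$, since both the conditional draw $X \mid S$ and the $X$-thinning kernel are $\theta$-free.

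It then remains to verify the two defining properties of a thinning for this construction. The recombination property is the easy direction: since $X = T(\Xt{1},\ldots,\Xt{K})$ deterministically and $S = S(X)$, we get $S = S(T(\Xt{1},\ldots,\Xt{K}))$, which is a deterministic function of $(\St{1},\ldots,\St{K})$ provided $S(T(\cdot))$ factors through $(S(\Xt{1}),\ldots,S(\Xt{K}))$ — and here I expect the paper's actual argument may need each $\Xt{k}$ itself (not merely $S(\Xt{k})$) to be carried along, so I would be prepared to take $\St{k} = \Xt{k}$ and note that these are still independent and still recombine to give $X$, hence $S$. \textbf{The main obstacle is the independence property:} I need the marginal joint law of $(\St{1},\ldots,\St{K})$ under $X \sim P_\theta$ to coincide with the law obtained by the two-stage construction (draw $S \sim$ its marginal, then draw $X \mid S$, then thin), and I need the resulting $\St{k}$ to be mutually independent. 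The first part is immediate because composing "draw $S$ from its marginal" with "draw $X \mid S$" reproduces the law of $X$ under $P_\theta$ exactly; so $(\Xt{1},\ldots,\Xt{K})$ produced this way has exactly the law it has when we thin a genuine $X \sim P_\theta$, and in particular the $\Xt{k}$ are mutually independent by hypothesis. Mutual independence of the $\Xt{k}$ then passes to $\St{k} = S(\Xt{k})$ (functions of independent random variables are independent), and their distributions depend on $\theta$ exactly as required since $\Xt{k}$'s does. Thus $S(X)$ admits a valid thinning, contradicting the hypothesis, and the lemma follows. The one delicate point I would be careful to spell out is that the $\theta$-freeness of the conditional $X \mid S$ is precisely the definition of sufficiency, and that it is legitimate to \emph{sample} from it inside the thinning procedure — this is what makes the construction "not rely on knowledge of $\theta$" in the sense Definition~\ref{def:thinning} demands.
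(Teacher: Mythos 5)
Your proposal is correct and is essentially the paper's own argument: the paper likewise proves the contrapositive, keeps the original folds $(\Xt{1},\ldots,\Xt{K})$ as the thinning of $S(X)$ with recombination function $S\circ T$, and verifies $\theta$-freeness of the conditional law of the folds given $S(X)$ by decomposing it (via the law of total probability) into the $\theta$-free kernel $X\mid S(X)$ from sufficiency composed with the $\theta$-free thinning kernel given $X$ --- exactly your two-stage sampling construction. Your fallback from $\St{k}=S(\Xt{k})$ to $\St{k}=\Xt{k}$ is the right call and matches what the paper does implicitly.
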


\begin{proof}
We will establish the result by proving the contrapositive. 

Suppose that $X\sim P_\theta$ can be thinned. Then, there must exist two distributions parameterized by $\theta$, $\Qt{1}_\theta$ and $\Qt{2}_\theta$, and a deterministic function $T(\cdot)$ such that the following hold:
\begin{enumerate}
    \item $(\Xt{1},\Xt{2})\sim \Qt{1}_\theta \times \Qt{2}_\theta$,
    \item $X = T(\Xt{1},\Xt{2})$, and
    \item $G_t$, the conditional distribution of $(\Xt{1},\Xt{2})$ given $T(\Xt{1},\Xt{2})=t$ does not depend on $\theta$.
\end{enumerate}

Substituting in $X=T(\Xt{1},\Xt{2})$, we see that $\tilde T(\Xt{1},\Xt{2}):=S(T(\Xt{1},\Xt{2}))=S(X)$. It thus remains to show that $\tilde G_{\tilde t}$, the conditional distribution of $(\Xt{1},\Xt{2})$ given $\tilde T(\Xt{1},\Xt{2})=\tilde t$, does not depend on $\theta$.

Consider the distribution of $(\Xt{1},\Xt{2})|\{\tilde T(\Xt{1},\Xt{2})=\tilde t\}$. It follows from the law of total probability that it can be expressed in terms of the distributions of
\begin{enumerate}
    \item $(\Xt{1},\Xt{2})|\{T(\Xt{1},\Xt{2})=t, \tilde T(\Xt{1},\Xt{2})=\tilde t\}$, for all $t$ such that $\tilde t = S(t)$, and
    \item $\{ T(\Xt{1},\Xt{2})\}|\{ \tilde T(\Xt{1},\Xt{2})=\tilde t\}$.
\end{enumerate} 

For the first term, note that the former conditioning event implies the latter. Dropping this redundant conditioning event yields $G_t$, which does not depend on $\theta$. For the second term, we can rewrite it as the distribution of $X|\{S(X)=\tilde t\}$, which does not depend on $\theta$ since $S(X)$ is sufficient for $\theta$. Thus, $\tilde G_{\tilde t}$ does not depend on $\theta$, thereby proving the result.
\end{proof}

The lemma implies that instead of proving that $X$ cannot be thinned, we can instead focus on showing that a sufficient statistic for its parameters cannot be thinned. We focus on the case where $\mu$ is known, as proving that one cannot thin when $\mu$ is known implies that one cannot thin when $\mu$ is unknown. When $\mu$ is known, $S(X)=(X-\mu)(X-\mu)^\top\sim\text{Wishart}_p(1,\Sigma)$  (i.e. $S(X)$ is a singular Wishart random matrix with one degree of freedom; see \cite{srivastava2003singular} for technical details on the singular Wishart distribution) is sufficient for $\Sigma$. 

The Wishart family is a natural exponential family, so Theorem 3 of \cite{dharamshi2023generalized} says that if it can be thinned then the thinning function must be addition. The next lemma shows that in the special case of one degree of freedom, the Wishart distribution cannot be thinned.

\begin{lemma}[A $\text{Wishart}_p(1, \Sigma)$ cannot be thinned] \label{lem:nondecomposable}
A singular Wishart with $n=1$ cannot be thinned in the sense of Definition~\ref{def:thinning}.
\end{lemma}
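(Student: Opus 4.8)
The plan is to show that a $\text{Wishart}_p(1,\Sigma)$ random matrix $W = ZZ^\top$ (with $Z \sim N_p(0,\Sigma)$) cannot be written as $T(\Wt{1},\Wt{2})$ for independent $\Wt{1},\Wt{2}$ via a thinning that does not use knowledge of $\Sigma$. Since Theorem~3 of \cite{dharamshi2023generalized} already forces the thinning function $T(\cdot)$ to be matrix addition (the Wishart is a natural exponential family), it suffices to prove that $W$ cannot be decomposed as $W = \Wt{1} + \Wt{2}$ with $\Wt{1},\Wt{2}$ independent and each depending on $\Sigma$. In other words, I must show that the $\text{Wishart}_p(1,\Sigma)$ distribution is \emph{indecomposable} (under convolution) within the relevant class of distributions, for \emph{every} positive definite $\Sigma$.

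First I would reduce to the case $\Sigma = I_p$ by a linear change of variables: if $W \sim \text{Wishart}_p(1,\Sigma)$ and $\Sigma = LL^\top$, then $L^{-1} W L^{-\top} \sim \text{Wishart}_p(1, I_p)$, and an additive decomposition of $W$ transports to an additive decomposition of $L^{-1}WL^{-\top}$ and vice versa; here the operation must not depend on $\Sigma$, but the reduction is just a tool in the impossibility argument so this is harmless. Next, I would recall the known fact that a rank-one Wishart is indecomposable --- this is exactly what the cited references \citep{SHANBHAG1976347, Peddada1991, srivastava2003singular} establish: the singular $\text{Wishart}_p(1,I_p)$ distribution (supported on the cone of rank-$\le 1$ PSD matrices) admits no nontrivial convolution factorization. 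The cleanest route is via characteristic functions / Laplace transforms: the Laplace transform of $\text{Wishart}_p(1,\Sigma)$ is $\det(I_p + 2\Theta\Sigma)^{-1/2}$ on an appropriate domain of $\Theta$, and one argues that $\det(I + 2\Theta\Sigma)^{-1/2}$ cannot be factored as a product of two Laplace transforms of genuine (non-degenerate) probability measures on the PSD cone; the half-integer power of a single linear-in-$\Theta$ determinant is the obstruction (analogous to the fact that a $\chi^2_1$, i.e. $\text{Gamma}(1/2, \cdot)$, is indecomposable). I would either invoke this directly from the literature or sketch it: if $\Wt{1}$ and $\Wt{2}$ are independent with $\Wt{1}+\Wt{2}=W$ of rank one a.s., then both summands must be a.s. rank $\le 1$ and supported on the same line, which after conditioning reduces to the univariate statement that a $\text{Gamma}(1/2,\lambda)$ variable is indecomposable, a classical result.

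The main obstacle I anticipate is handling the \emph{singularity} of the distribution carefully: $W$ lives on the boundary of the PSD cone (rank exactly one), so one cannot blithely manipulate densities, and one must argue that any additive summand $\Wt{k}$ is itself concentrated on rank-$\le 1$ matrices and in fact on the \emph{same} random one-dimensional subspace --- otherwise $\Wt{1}+\Wt{2}$ would generically have rank $2$. Making this ``alignment of supports'' argument rigorous (e.g.\ via the fact that $\operatorname{rank}(A+B) \ge \max(\operatorname{rank} A, \operatorname{rank} B)$ fails in general but $\operatorname{rank}(A+B)\le \operatorname{rank} A + \operatorname{rank} B$ forces both ranks to be $\le 1$, and then range considerations pin down the common line) is the crux; once the problem is pushed onto a single line, indecomposability of $\text{Gamma}(1/2,\cdot)$ closes it. I would also need to confirm the degenerate/trivial cases are excluded by the definition of thinning (a constant summand gives a trivial decomposition, which is precisely what ``non-trivially'' in Theorem~\ref{thm:maxn} rules out).
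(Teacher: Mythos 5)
Your overall architecture matches the paper's: invoke Theorem 3 of \cite{dharamshi2023generalized} to force any thinning function for a natural exponential family to be addition, and then argue that the rank-one Wishart admits no non-trivial additive decomposition into independent summands. The paper disposes of this second step purely by citation to the classical indecomposability results \citep{SHANBHAG1976347, Peddada1991}, and if you take the "invoke this directly from the literature" branch of your proposal, your proof is complete and essentially identical to the paper's.

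The sketch you offer in place of that citation, however, contains a genuine error at its final step. You reduce, after aligning supports to a common line, to ``the univariate statement that a $\text{Gamma}(1/2,\lambda)$ variable is indecomposable, a classical result.'' This is false: every Gamma distribution is infinitely divisible, so a $\text{Gamma}(1/2,\lambda)$ variable is the sum of two independent $\text{Gamma}(1/4,\lambda)$ variables. Indeed, Proposition \ref{prop:gdt} of this very paper thins $(X-\mu)^2\sim\text{Gamma}(1/2,1/(2\sigma^2))$ into $K$ independent Gamma pieces --- which is exactly why Theorem \ref{thm:maxn} is stated only for $p>1$. The same confusion infects your Laplace-transform heuristic: $\det(I_p+2\Theta\Sigma)^{-1/2}$ factors perfectly well as $\det(I_p+2\Theta\Sigma)^{-1/4}\cdot\det(I_p+2\Theta\Sigma)^{-1/4}$; the obstruction for $p\ge 2$ is that $\det(I_p+2\Theta\Sigma)^{-1/4}$ is not the Laplace transform of any probability measure on the positive semi-definite cone (the L\'evy--Gindikin phenomenon that Wisharts with these fractional degrees of freedom do not exist), and ruling out all \emph{other} candidate factorizations is precisely the nontrivial content of the cited work of Shanbhag and of Peddada and Richards. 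So the indecomposability here is an intrinsically multivariate fact that cannot be obtained by conditioning down to one dimension, and any reduction that lands on a univariate Gamma proves nothing. Your support-alignment idea is salvageable in a different direction --- since the supports $S_1$ and $S_2$ of the two summands are fixed sets whose sumset must lie in the rank-$\le 1$ cone, one can try to force both supports onto a single \emph{fixed} ray and contradict the fact that the range of $W$ points in a random direction --- but making rigorous the preliminary step that the summands are (up to deterministic shifts) themselves positive semi-definite and rank $\le 1$ is delicate, and the cleanest course remains the paper's: cite the classical result and stop.
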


\begin{proof}
Consider the random matrix $W\sim\text{Wishart}_p(1,\Sigma)$. Classical results regarding the Wishart distribution state that $W$ is indecomposable \citep{lévy_1948, SHANBHAG1976347, Peddada1991}. Since there do not exist two non-trivial random matrices that sum to $W$, there cannot exist an additive thinning function. The result then follows from the contrapositive of Theorem 3 of \cite{dharamshi2023generalized}.
\end{proof}

We will now prove the theorem by contradiction. 

\begin{proof}
Suppose that $X$ can be non-trivially decomposed into multiple independent random variables, that is, it can be thinned. Then the contrapositive of Lemma \ref{lem:thin} implies that $S(X)$ can also be thinned. However, since $S(X)\sim\text{Wishart}_p(1,\Sigma)$, Lemma \ref{lem:nondecomposable} proves that $S(X)$ cannot be thinned, a contradiction. Therefore, $X$ cannot be thinned.
\end{proof}

\subsection{Proof of Theorem \ref{thm:dependent}}

\begin{proof}
Let $W\sim N_{r\times p}(0,I_r,\Sigma')$ be the matrix of noise vectors generated to construct $X^\aug$, and let $Q_W$ denote all columns of $Q$ excluding the first. Note that $Q_XQ_X^\top + Q_WQ_W^\top = I_K$. Then, we can write $X'$ and $\text{vec}(X'^\top)$ as
\begin{align*}
    X' &= QX^\aug = Q_XX^\top + Q_WW \\
    \implies \text{vec}(X'^\top) &= \text{vec}(X Q_X^\top) + \text{vec}(W^\top Q_W^\top).
\end{align*}

Since $\text{vec}(X Q_X^\top)$ and $\text{vec}(W^\top Q_W^\top)$ are by construction independent Gaussian vectors of length $Kp$, it remains to identify and add their parameters. 
Starting with the mean, by linearity of expectation, $E[\text{vec}(X'^\top)]=\text{vec}(E[X'^\top])=\text{vec}(E[X]Q_X^\top)=\text{vec}(\mu Q_X^\top)$, and then the covariance,
\begin{align*}
\text{Var}(\text{vec}(X'^\top)) &= \text{Var}(\text{vec}(X Q_X^\top)) + \text{Var}(\text{vec}(W^\top Q_W^\top)) \\
&= \text{Var}((Q_X\otimes I_p)\text{vec}(X)) + \text{Var}((Q_W\otimes I_p)\text{vec}(W^\top)) \\
&= (Q_X\otimes I_p)\text{Var}(X)(Q_X^\top\otimes I_p) + (Q_W\otimes I_p)\text{Var}(\text{vec}(W^\top))(Q_W^\top\otimes I_p) \\
&= (Q_X\otimes I_p)(1\otimes\Sigma)(Q_X^\top\otimes I_p) + (Q_W\otimes I_p)(I_r\otimes\Sigma')(Q_W^\top\otimes I_p) \\
&= Q_XQ_X^\top \otimes \Sigma + Q_WQ_W^\top\otimes\Sigma' \\
&= Q_XQ_X^\top \otimes \Sigma + (I_K-Q_XQ_X^\top)\otimes\Sigma'.
\end{align*}
Thus, $\text{vec}(X'^\top)\sim  N_{Kp}\left(\text{vec}(\mu Q_X^\top),  Q_XQ_X^\top \otimes \Sigma + (I_K-Q_XQ_X^\top) \otimes \Sigma'\right)$ as required.
\end{proof}

\subsection{Proof of Theorem \ref{thm:fisher}}

\begin{proof}
From Remark \ref{remark:conditional}, we have that $\Xt{1}\sim N_p(q_1\mu(\theta),q_1^2\Sigma(\phi) + (1-q_1^2)\Sigma')$. Standard results about the Fisher information of a multivariate Gaussian \citep{mardia1984maximum} yield that 
\begin{align*}
\left[I_{\Xt{1}}(\theta)\right]_{ii'} &= \left(\frac{\partial q_1\mu(\theta)^\top}{\partial \theta_i}\right)\left(q_1^2\Sigma(\phi) + (1-q_1^2)\Sigma'\right)^{-1}\left(\frac{\partial q_1\mu(\theta)}{\partial \theta_{i'}}\right) \\
&= q_1^2\left(\frac{\partial \mu(\theta)^\top}{\partial \theta_i}\right)\left(q_1^2\Sigma(\phi) + (1-q_1^2)\Sigma'\right)^{-1}\left(\frac{\partial \mu(\theta)}{\partial \theta_{i'}}\right), \\
\\
\left[I_{\Xt{1}}(\phi)\right]_{jj'} &= \frac{1}{2}\trace\left[\left(q_1^2\Sigma(\phi) + (1-q_1^2)\Sigma'\right)^{-1}\left(\frac{\partial \left(q_1^2\Sigma(\phi) + (1-q_1^2)\Sigma'\right)}{\partial \phi_j}\right)\right. \\
&\quad\quad\quad\quad\left.\left(q_1^2\Sigma(\phi) + (1-q_1^2)\Sigma'\right)^{-1}\left(\frac{\partial \left(q_1^2\Sigma(\phi) + (1-q_1^2)\Sigma'\right)}{\partial \phi_{j'}}\right)\right] \\
&= \frac{1}{2}\trace\left[\left(q_1^2\Sigma(\phi) + (1-q_1^2)\Sigma'\right)^{-1}\frac{\partial q_1^2\Sigma(\phi)}{\partial \phi_j}\left(q_1^2\Sigma(\phi) + (1-q_1^2)\Sigma'\right)^{-1}\frac{\partial q_1^2\Sigma(\phi)}{\partial \phi_{j'}}\right] \\
&= \frac{1}{2}q_1^4\trace\left[\left(q_1^2\Sigma(\phi) + (1-q_1^2)\Sigma'\right)^{-1}\frac{\partial \Sigma(\phi)}{\partial \phi_j}\left(q_1^2\Sigma(\phi) + (1-q_1^2)\Sigma'\right)^{-1}\frac{\partial \Sigma(\phi)}{\partial \phi_{j'}}\right].
\end{align*}

The forms of $\left[I_{\Xt{2}|\Xt{1}}(\theta)\right]_{ii'}$ and $\left[I_{\Xt{2}|\Xt{1}}(\phi)\right]_{jj'}$ then follows from the fact that $\left[I_X(\theta)\right]_{ii'}=\left[I_{\Xt{1}}(\theta)\right]_{ii'}+\left[I_{\Xt{2}|\Xt{1}}(\theta)\right]_{ii'}$ and $\left[I_X(\phi)\right]_{jj'}=\left[I_{\Xt{1}}(\phi)\right]_{jj'}+\left[I_{\Xt{2}|\Xt{1}}(\phi)\right]_{jj'}$ since Fisher information is preserved by Algorithm \ref{alg:premult}.
\end{proof}

\subsection{Proof of Proposition \ref{prop:GPthin}}

\begin{proof}
Writing $\Xt{k}(t) = Q_{k1}X(t) + \sum_{j=1}^r Q_{k(1+j)}W_j(t)$, that is, as a linear combination of independent Gaussian processes, it follows from the basic properties of Gaussian processes that $\Xt{k}(t)\sim GP(Q_{k1}\mu,(\sum_{j=1}^{r+1}Q_{kj}^2)C) = GP(\sqrt{\epsilon_k}\mu,C)$.

It remains to show that $\Xt{1}(t),\dots,\Xt{K}(t)$ are independent. To see this, consider any finite index set $T_d=\{t_1,\dots,t_d\}\in T$ and construct the random vectors $(\Xt{k}(t_1),\dots,\Xt{k}(t_d))=(Q_{k1}X(t_1)+ \sum_{j=1}^r Q_{k(1+j)}W_j(t_1),\dots,Q_{k1}X(t_d)+ \sum_{j=1}^r Q_{k(1+j)}W_j(t_d))$ for all $k\in\{1,\dots,K\}$. We will write this as $\Xt{k}(T_d)=Q_{k1}X(T_d) + \sum_{j=1}^r Q_{k(1+j)}W_j(T_d)$ for convenience. Let $\Sigma_{T_d}$ be the covariance matrix constructed by evaluating $C(t,t')$ at every pair of points in $T_d$. Then, the covariance between any pair of vectors $\Xt{k}(T_d)$ and $\Xt{k'}(T_d)$ for $1\le k < k' \le K$ is:
\begin{align*}
\text{Cov}\left(\Xt{k}(T_d),\Xt{k'}(T_d)\right) &=\text{Cov}\left(Q_{k1}X(T_d) + \sum_{j=1}^r Q_{k(1+j)}W_j(T_d),Q_{k'1}X(T_d) + \sum_{j=1}^r Q_{k'(1+j)}W_j(T_d)\right) \\
&= Q_{k1}Q_{k'1}\text{Var}\left(X(T_d)\right) + \sum_{j=1}^r Q_{k(1+j)}Q_{k'(1+j)}\text{Var}\left(W_j(T_d)\right) \\
&= \sum_{j=1}^{r+1} Q_{kj}Q_{k'j}\Sigma_{T_d} \\
&= 0.
\end{align*}
Since the covariance is $0$ between all pairs of folds for all finite index sets, it follows that $\Xt{1}(t),\dots,\Xt{K}(t)$ are independent as required.
\end{proof}

\subsection{Proof of Theorem \ref{thm:dependent-GP}}

\begin{proof}
Writing $\Xt{k}(t) = Q_{k1}X(t) + \sum_{j=1}^r Q_{k(1+j)}W_j(t)$, that is, as a linear combination of independent Gaussian processes, it follows from the basic properties of Gaussian processes that marginally, $\Xt{k}(t)\sim GP(Q_{k1}\mu,Q_{k1}^2C+(\sum_{j=1}^{r}Q_{k(1+j)}^2)C') = GP(q_k\mu,q_k^2C + (1-q_k^2)C')$.

To show the conditional result, first consider any finite index set $T_d=\{t_1,\dots,t_d\}\in T$ and construct the random vectors $\Xt{1}(T_d)=(\Xt{1}(t_1),\dots,\Xt{1}(t_d))$ and $\Xt{2}(T_d)=(\Xt{2}(t_1),\dots,\Xt{2}(t_d))$. These random vectors are jointly Gaussian as they are linear combinations of the same set of underlying independent Gaussians:
$$
\begin{pmatrix} \Xt{1}(T_d) \\ \Xt{2}(T_d) \end{pmatrix} \sim N_{2d}\left(\begin{pmatrix}q_1\mu_{T_d} \\ q_2\mu_{T_d}\end{pmatrix}, 
\begin{pmatrix}q_1^2\Sigma_{T_d}+(1-q_1^2)\Sigma'_{T_d} & q_1q_2(\Sigma_{T_d}-\Sigma'_{T_d}) \\ q_1q_2(\Sigma_{T_d}-\Sigma'_{T_d}) & q_2^2\Sigma_{T_d}+(1-q_2^2)\Sigma'_{T_d} \end{pmatrix}\right)
$$
The result then follows from standard results about the conditional distribution of joint Gaussians.
\end{proof}
\section{Hyperparameter selection when $n=1$}
\label{app:tuning}

In Section \ref{subsec:fission-computation}, we discuss the computational benefits of setting $\Sigma'=\sigma'^2I_p$ where $\sigma' > 0$ is a constant chosen by the analyst. Here we discuss considerations for choosing the values of $Q$ and $\sigma'$. Our overarching goal is to achieve predictable allocations of Fisher information about $\Sigma$ across folds.

We focus on the case when $K=2$. Suppose that we apply Algorithm \ref{alg:premult} to a single sample of $X\sim N_p(\mu,\Sigma)$ with $r=1$, $K=2$, $\Sigma'=\sigma'^2I_p$, and $Q$ a $2\times 2$ orthogonal matrix. Our goal is to allocate $\gamma\in(0,1)$ of the total Fisher information about each of the $N$ parameters in $\Sigma$ to $\Xt{1}$ (i.e. $\gamma[I_{X}(\phi)]_{jj'} = [I_{\Xt{1}}(\phi)]_{jj'}$ for all $1 \le j,j' \le N$). 

Using the expressions given in Theorem \ref{thm:fisher}, we can rewrite our goal as
$$
\gamma\trace\left[\Sigma(\phi)^{-1}\frac{\partial \Sigma(\phi)}{\partial \phi_j} \Sigma(\phi)^{-1}\frac{\partial \Sigma(\phi)}{\partial \phi_{j'}} \right] = q_1^4\trace\left[\left(q_1^2\Sigma(\phi)+(1-q_1^2)\sigma'^2I_p\right)^{-1} \frac{\partial \Sigma(\phi)}{\partial \phi_j} \left(q_1^2\Sigma(\phi)+(1-q_1^2)\sigma'^2I_p\right)^{-1} \frac{\partial \Sigma(\phi)}{\partial \phi_{j'}} \right]
$$
for all $1 \le j,j' \le N$.

In order to achieve the target Fisher information allocation, we must identify a pair $(q_1,\sigma')$ that solves the above system of equations. As this is a system of $K(K+1)/2$ equations in two unknowns, an exact solution will not in general be available unless $\Sigma(\phi)$ is diagonal. Furthermore, as established in Section \ref{subsec:fission-fisher}, the Fisher information expression depends on the unknown $\Sigma$, and so, without knowing $\Sigma$, we will not be able to choose a $Q$ and $\sigma'$ that satisfy an exact allocation. However, certain choices of $Q$ and $\sigma'$ may be preferable. 

We first suggest to simplify the problem by fixing $q_1=\gamma^{1/4}$. Then, we propose to choose $\sigma'$ by solving the following univariate minimization problem:
\begin{align*}
\sigma'=&\arg\min_{c>0} \sum_{1\le j<j' \le N}
\left(\trace\left[\Sigma(\phi)^{-1}\frac{\partial \Sigma(\phi)}{\partial \phi_j} \Sigma(\phi)^{-1}\frac{\partial \Sigma(\phi)}{\partial \phi_{j'}} - \right.\right. \\
&\left.\left.\left(\gamma^{1/2}\Sigma(\phi)+(1-\gamma^{1/2})c^2I_p\right)^{-1} \frac{\partial \Sigma(\phi)}{\partial \phi_j} \left(\gamma^{1/2}\Sigma(\phi)+(1-\gamma^{1/2})c^2I_p\right)^{-1} \frac{\partial \Sigma(\phi)}{\partial \phi_{j'}} \right]\right)^2.
\end{align*}

As the objective function depends on $\Sigma(\phi)$, it cannot be solved exactly. Instead, we suggest to replace $\Sigma(\phi)$ with an a priori guess, $S$, and then use standard numerical optimization software to estimate $\sigma'$. In Section \ref{sec:sims}, since the variances are known to all equal $1$, for convenience we choose $S=I_p$ leading to $\sigma'=1$.

We conclude by commenting that when $K>2$, in the interest of preserving symmetry for tasks such as cross-validation, we often will choose $q_1=\dots=q_K=1/\sqrt{K}$. One can then choose $\sigma'$ as in the $K=2$ case.

\section{Additional simulation and data analysis details}\label{app:sims}

\subsection{Details of Section \ref{subsec:ex-test}}
\label{app:sims-test}

Algorithm \ref{alg:inference} summarises the three methods used to conduct inference on the thresholded entries of $\Delta$. We combine all three methods into one algorithm and use substeps to indicate differences by approach.

\begin{algo}[Inference after selecting the largest entry of a matrix-variate Gaussian row-covariance]
\label{alg:inference}
Start with a single realization of $X\sim N_{a\times b}(\mathbf{0}_{a,b},\Delta,\Gamma(\rho))$ and a constant $q_1\in(0,1)$. Then, perform the following:
\begin{enumerate}
    \item Apply Algorithm \ref{alg:premult} with $r=1$, $K=2$, $Q=\begin{pmatrix}q_1^2 & \sqrt{1-q_1^2} \\ \sqrt{1-q_1^2} & -q_1^2\end{pmatrix}$, and $\Sigma'=I_{ab}$ to decompose $\text{vec}(X)\sim N_{ab}(\mathbf{0}_{ab},\Gamma(\rho)\otimes\Delta)$ into $\Xt{1}$ and $\Xt{2}$. See Remark \ref{remark:conditional} for details on the marginal and conditional distributions.
    
    \item Compute a starting estimate of $\hat\Delta$: 
    \begin{enumerate}[]
        \item Method (a): Set $\hat\Delta$ to the sample row-covariance of $X$.
        \item Methods (b) and (c): Construct $\text{mat}(\Xt{1}; a, b)$, where $\text{mat}(\cdot; a,b)$ is the function that reshapes a length $ab$ vector into an $a\times b$ matrix in column order. Set $\hat\Delta$ to the sample row-covariance of $\text{mat}(\Xt{1})$.
    \end{enumerate}
    \item Select the entry of $\hat\Delta$ with the largest magnitude, and denote its counterpart in $\Delta$ as $\delta$. 
    \item Use a likelihood ratio test to test the hypothesis $H_0:\delta=0$, where the test statistic is given by:
    \begin{enumerate}[{Method} (a):]
        \item $T_{naive} = -2\ln\frac{\sup_{\rho\in(0,1),\Delta\in\Theta_0}\mathcal{L}(\Delta,\rho;X)}{\sup_{\rho\in(0,1),\Delta\in\Theta}\mathcal{L}(\Delta,\rho;X)}$ where the log-likelihoods are with respect to the distribution of $X$,
        \item  $T_{marg} = -2\ln\frac{\sup_{\rho\in(0,1),\Delta\in\Theta_0}\mathcal{L}(\Delta,\rho;\Xt{2})}{\sup_{\rho\in(0,1),\Delta\in\Theta}\mathcal{L}(\Delta,\rho;\Xt{2})}$ where the log-likelihoods are with respect to the marginal distribution of $\Xt{2}$,
        \item $T_{cond} = -2\ln\frac{\sup_{\rho\in(0,1),\Delta\in\Theta_0}\mathcal{L}(\Delta,\rho;\Xt{2}|\Xt{1})}{\sup_{\rho\in(0,1),\Delta\in\Theta}\mathcal{L}(\Delta,\rho;\Xt{2}|\Xt{1})}$ where the log-likelihoods are with respect to the conditional distribution of $\Xt{2}|\Xt{1}$, 
    \end{enumerate}
    where $\Theta$ denotes the set of symmetric positive definite matrices with unit diagonal entries, and $\Theta_0$ is the subset of $\Theta$ such that $\delta=0$.
    \item Compute a p-value by comparing $T$ to the quantiles of a $\chi^2_1$ distribution.
\end{enumerate}
\end{algo}

In Step 4 of Algorithm \ref{alg:inference}, we approximate the maximum likelihood on the constrained set $\Theta_0$ by optimizing the log-likelihood over the unconstrained set $\Theta$ subject to a L2 penalty on $\delta$ with penalty parameter 500,000. All likelihoods are optimized using the \texttt{optimize} routine implemented by the 
\texttt{cmdstanr} \texttt{R} interface to the \texttt{Stan} programming language \citep{carpenter2017stan, stan, cmdstanr}.

\subsection{Details of Section \ref{subsec:ex-eeg}}
\label{app:sims-eeg}

Algorithm \ref{alg:eeg} describes the steps taken to estimate $\hat\Delta$ and $\hat\rho$ from $\Xt{1}$ in Section \ref{subsec:ex-eeg}. 

\begin{algo}[Estimating the covariance parameters from noisy electroencephalograpy data]
\label{alg:eeg}
Start with a vector $Y\sim N_{ab}(\mathbf{0}_{ab},q_1^2(\Gamma(\rho)\otimes\Delta)+(1-q_1^2)I_{ab})$ (this is the first fold, $\Xt{1}$, after applying Algorithm \ref{alg:premult} to the vectorised electroencephalograpy data) where $q_1^2\in(0,1)$. Then, perform the following:
\begin{enumerate}
    \item Construct $Y_m=\text{mat}(Y; a, b)$, where $\text{mat}(\cdot; a,b)$ is the function that reshapes a length $ab$ vector into an $a\times b$ matrix in column order. 
    \item To estimate $\hat\Delta$, first estimate the sample row-covariance $S$ from $Y_m$. Then, compute the starting estimate $\Delta^*=\frac{1}{q_1^2}(S-(1-q_1^2)I_{a})$. If $\Delta^*$ is not positive semidefinite, set any negative eigenvalues to $0.1$. Finally, extract the correlation matrix from $\Delta^*$ and denote it by $\hat\Delta$.
    \item To estimate $\hat\rho$, treat the rows of $Y_m$ as independent and compute the maximum likelihood estimate of $\rho$ under a latent AR(1) model with unit innovation variance and $1-q_1^2$ emission variance using the \texttt{dlm} \texttt{R} package \citep{dlm}.
\end{enumerate}
\end{algo}


\subsection{Computation for matrix-variate normal data}
\label{app:fast}

When $X\sim N_{a\times b}(\mathbf{0}_{a,b},\Delta,\Gamma(\rho))$, the eigendecomposition of the covariance of $\mathrm{vec}(X)$ can be computed efficiently. Writing the eigendecompositions of $\Gamma(\rho)$ and $\Delta$ as $PA(\rho)P^\top$ and $RBR^\top$, respectively, we have that $\Gamma(\rho)\otimes\Delta=(P\otimes R)(A(\rho)\otimes B)(P^\top\otimes R^\top)$. Moreover, as $\Gamma(\rho)$ follows a first-order autoregressive model, $P$ is deterministic, and an explicit formula for $A(\rho)$ is available \citep{tridiag}. Thus, \eqref{eq:fastx1}-\eqref{eq:fastx2x1} can be used to efficiently compute likelihoods in this setting.

\end{document}